\newcommand{\wang}[1]{\ifthenelse{\boolean{showcomments}}
	{ \textcolor[rgb]{1,0,1}{(ZW:  #1)}}{}}
\newcommand{\fliu}[1]{\ifthenelse{\boolean{showcomments}}
	{ \textcolor{blue}{(FL:  #1)}}{}}
\newcommand{\ychen}[1]{\ifthenelse{\boolean{showcomments}}
	{ \textcolor{green}{(ZP:  #1)}}{}}
\newcommand{\slow}[1]{\ifthenelse{\boolean{showcomments}}
	{ \textcolor{blue}{(SL:  #1)}}{}}
\theoremstyle{definition}
\newtheorem{theorem}{Theorem}
\newtheorem{lemma}[theorem]{Lemma}
\theoremstyle{definition}
\title{ Distributed Optimal Frequency Control Considering a Nonlinear Network-Preserving Model}
\begin{document}
%


%

\author{
	Zhaojian~Wang,
		Feng~Liu,
	John Z. F. Pang,
        Steven~Low,~\IEEEmembership{Fellow,~IEEE} and Shengwei~Mei,~\IEEEmembership{Fellow,~IEEE}
        \thanks{This work was supported  by the National Natural Science Foundation
				of China (No. 51677100, U1766206, No. 51621065), the US National Science Foundation through awards EPCN 1619352, CCF 1637598, CNS 1545096, ARPA-E award DE-AR0000699, and Skoltech through Collaboration Agreement 1075-MRA.  (Corresponding author: Shengwei Mei)   }     

		\thanks{Z. Wang, F. Liu and S. Mei are with the State Key Laboratory of Power System and  Department of Electrical Engineering, Tsinghua University, Beijing,
				China, 100084. e-mail: (meishengwei@mail.tsinghua.edu.cn).}
		\thanks{S. H. Low and J. Pang  are with the Department
				of Electrical Engineering, California Institute of Technology, Pasadena, CA, USA, 91105. e-mail:(slow@caltech.edu).}
}

\markboth{Journal of \LaTeX\ Class Files,~Vol.~xx, No.~xx, xx~xxxx}%
{Shell \MakeLowercase{\textit{et al.}}: Bare Demo of IEEEtran.cls for IEEE Journals}

\maketitle

\begin{abstract}
    This paper addresses the distributed optimal frequency control of power systems considering a network-preserving model with nonlinear power flows and excitation voltage dynamics. Salient features of the proposed distributed control strategy are fourfold: i) nonlinearity is considered to cope with large disturbances; ii) only a part of generators are controllable; iii) no load measurement is required; iv) communication connectivity is required only for the controllable generators. To this end, benefiting from the concept of ``virtual load demand", we first design the distributed controller for the controllable generators by leveraging the  primal-dual decomposition technique. We then propose a method to estimate the virtual load demand of each controllable generator based on local frequencies. We derive incremental passivity conditions for the uncontrollable generators. Finally, we prove that the closed-loop system is asymptotically stable and its equilibrium attains the optimal solution to the associated economic dispatch problem. Simulations, including small and large-disturbance scenarios, are carried on the New England system, demonstrating the effectiveness of our design.
\end{abstract}

\begin{IEEEkeywords}
	Frequency control, network-preserving model, distributed control, incremental output passivity
\end{IEEEkeywords}


\section{Introduction}

Frequency restoration and economic dispatch (ED) are two important problems in power system operation. Conventionally, they are implemented hierarchically in a centralized fashion, where  the former is addressed in a fast time scale while the latter in a slow time scale\cite{Cai:Distributed,Shiltz:Integration}. While this centralized hierarchy  works well for the traditional power system, it may not be able to keep pace with the fast development of our power system due to: 1) slow response, 2) insufficient flexibility, 3) low privacy, 4) intense communication, and 5) single point of failure issue. In this regard, the idea of  breaking such a hierarchy is proposed in \cite{jokic:real,Breaking:Dorfler}. 

In \cite{Li:Connecting}, an intrinsic connection between the asymptotic stability of the dynamical frequency control system with the ED problem is proposed. It leads to a so-called inverse-engineering methodology for designing optimal frequency controllers, where the (partial) primal-dual gradient algorithm plays an essential role \cite{Feijer:Stability,Cai:Distributed}. 
When designing optimal frequency controllers, in choice of power flow models, including the linear model (usually associated with DC power flow, e.g. \cite{Li:Connecting,Changhong:Design,Mallada-2017-OLC-TAC,Kasis:Primary1,zhao:aunified,Pang:Optimal,Wang:Decentralized,Zhang:Achieving,Distributed_I:Wang,Distributed_II:Wang}) and the nonlinear model (usually associated with AC power flow, e.g. \cite{Stegink:aport,stegink:stabilization,Stegink:aunifying, Zhang:A, Zhao:Optimal, Trip:optimal}), is crucial. While the closed-loop system can be interpreted in a linear model as carrying out a primal-dual algorithm for solving ED, this interpretation of frequency control may not
 hold in a nonlinear model. In addition to nonlinear power flow, excitation voltage dynamics are considered in \cite{Stegink:aport, stegink:stabilization,Stegink:aunifying}, making the model more realistic. 

The aforementioned idea is further developed to enable the design of distributed optimal frequency controllers. Roughly speaking, the works of distributed optimal frequency control can be divided into two categories in terms of different power system models:  network-reduced models e.g. \cite{Li:Connecting,Zhang:Achieving,Stegink:aport,Wang:Decentralized,Stegink:aunifying,Distributed_I:Wang,Distributed_II:Wang} and network-preserving models e.g. \cite{Changhong:Design,Mallada-2017-OLC-TAC,Kasis:Primary1,stegink:stabilization,zhao:aunified,Pang:Optimal}. In  network-reduced models, generators and/or loads are aggregated and treated as one bus or control area, which are connected to each other through tie lines. In \cite{Li:Connecting,Zhang:Achieving}, aggregated generators in each area are driven by automatic generation control (AGC) to restore system frequency. 
\cite{Stegink:aport,Wang:Decentralized,Stegink:aunifying,Distributed_I:Wang,Distributed_II:Wang} further consider  both the aggregated generators and load demands in  frequency control. 
In network-preserving models, generator and load buses are separately handled with different dynamic models and coupled by power flows, rendering a set of differential algebraic equations (DAEs). In \cite{Changhong:Design}, an optimal load control (OLC) problem is formulated and a primary load-side control is derived as a partial primal-dual gradient algorithm for solving the OLC problem. The design approach is extended to secondary frequency control (SFC) that restores  nominal frequency in \cite{Mallada-2017-OLC-TAC}.  It is generalized in \cite{Kasis:Primary1}, where passivity condition guaranteeing stability is proposed for each local bus. Then, a unified framework combining load and generator control is advocated in \cite{zhao:aunified}. A similar model is also utilized in \cite{Pang:Optimal}, where only limited control coverage is needed. Similar to \cite{Stegink:aunifying}, the Hamiltonian method is  used to analyze the network-preserving model in \cite{stegink:stabilization}. Compared with the  network-reduced model, the network-preserving model describes power systems more precisely and appear more suitable for analyzing interactions between different control areas. Therefore, we specifically consider the network-preserving model in this work.  

Almost all of aforementioned works assume that all buses are controllable and load demands at all buses are accurately measurable, especially for those proposing secondary frequency control. Moreover, it is usually assumed that the communication network has the same topology of the power grid. These assumptions are strong and arguably unrealistic for practice. First,  only a part of generators and controllable load buses can participate in frequency control in practice. Second, the communication network may not be identical to the topology of the power grid. Third, it is difficult to accurately measure the load injection on every bus. In extreme cases, even the number of load buses is unknown in practice. These problems highlight some  reasons why theoretical work in this area is hardly applied in practical systems.

In this context, a novel distributed frequency recovery controller is proposed that only needs to be implemented on controllable generator buses. To this end, a network-preserving power system model is adopted. This work is an extension of our former work \cite{Pang:Optimal}. However, in this paper, we design a totally different controller considering a third-order nonlinear generator model with excitation voltage dynamics and nonlinear power flow. 
It is also motivated partly by \cite{stegink:stabilization}, which adopts a similar model, although our results are significantly different from those in \cite{stegink:stabilization}. 
Differing from \cite{Pang:Optimal}, the controller avoids load measurement, which greatly facilitates implementation. 
By using LaSalle's invariance principle, it is proved that the closed-loop system converges to an equilibrium point that solves the economic dispatch problem.  The salient features of the proposed distributed optimal frequency controller are: 
\begin{enumerate}
	\item \emph{Model:} The network-preserving model of power system is used, including excitation voltage dynamics and nonlinear power flow. This, unlike work on the linear model, returns a valid controller even under large disturbances. 
	
	\item \emph{Controllability:} We allow an arbitrary subset of generator buses to be controllable.
	
	\item \emph{Controller:} The distributed  controller achieves the optimal solution to economic dispatch while restoring the nominal frequency, provided that certain sufficient conditions on  active power dynamics of uncontrollable generators and excitation voltage dynamics of all generators are satisfied. 
	
	\item \emph{Communication:} Communication is required between neighboring controllable generators only, and the communication network can be  arbitrary as long as it remains connected. 
	
	\item \emph{Measurement:} No load measurement is needed, and the controller is adaptive to unknown load changes.
\end{enumerate}

The rest of this paper is organized as follows. In Section II, we introduce the power system model. Section III formulates the optimal economic dispatch problem. The distributed controller is proposed in Section IV, and we further prove the optimality and stability of the corresponding equilibrium point in Section V.  The load estimation method is proposed in Sectin VI. We confirm the performance of  controllers via simulations on a detailed power system in Section VII. Section VIII concludes the paper.

\section{Power System Models}

\subsection{Power Network}

Power systems are composed of many generators and loads, which are integrated in different buses and connected by power lines, forming a power network. Buses are divided into three types, controllable generator buses, uncontrollable generator buses and pure load buses. Denote controllable generator buses as $\mathcal {N}_{CG}=\{1, 2, \dots, n_{CG}\}$, uncontrollable generator buses as $\mathcal {N}_{UG}=\{n_{CG}+1, n_{CG}+2, \dots, n_{CG}+n_{UG}\}$, and pure load buses as $\mathcal {N}_{L}=\{n_{CG}+n_{UG}+1, \dots, n_{CG}+n_{UG}+n_{L}\}$. 
Then the set of generator buses is $\mathcal{N}_{G}=\mathcal {N}_{CG} \cup \mathcal {N}_{UG}$ and set of all the buses is $\mathcal {N}=\mathcal{N}_{G}\cup\mathcal{N}_{L}$. It should be noted that load can be connected to any bus besides pure load buses. 

Let $\mathcal E\subseteq \mathcal N\times \mathcal N$ be the set of lines, where $(i,j)\in \mathcal E$ if buses $i$ and $j$ are connected directly. Then the whole system is modeled as a connected graph $\mathcal {G}=(\mathcal {N},\mathcal {E})$. 
The admittance of each line is {$Y_{ij}:=G_{ij}+\sqrt{-1}B_{ij}$} with $G_{ij}=0$ for every line. Denote the bus voltage by $V_i\angle\theta_i$, where $V_i$ is the amplitude and $\theta_i$ is the voltage phase angle. The active and reactive power $P_{ij},Q_{ij}$ from bus $i$ to bus $j$ is
\begin{subequations}
	\label{line power}
	\begin{align}
	P_{ij}&={{V_i}{V_j}{B_{ij}}\sin \left( {{\theta _i} - {\theta _j}} \right)}
	\label{line_reactivepower}  \\ 
	Q_{ij}&={B_{ij}V_i^2 - {V_i}{V_j}B_{ij}\cos \left( {{\theta _i} - {\theta _j}} \right)} 
	\label{line_activepower}
	\end{align}
\end{subequations}

\begin{figure}[t]
	\centering
	\includegraphics[width=0.35\textwidth]{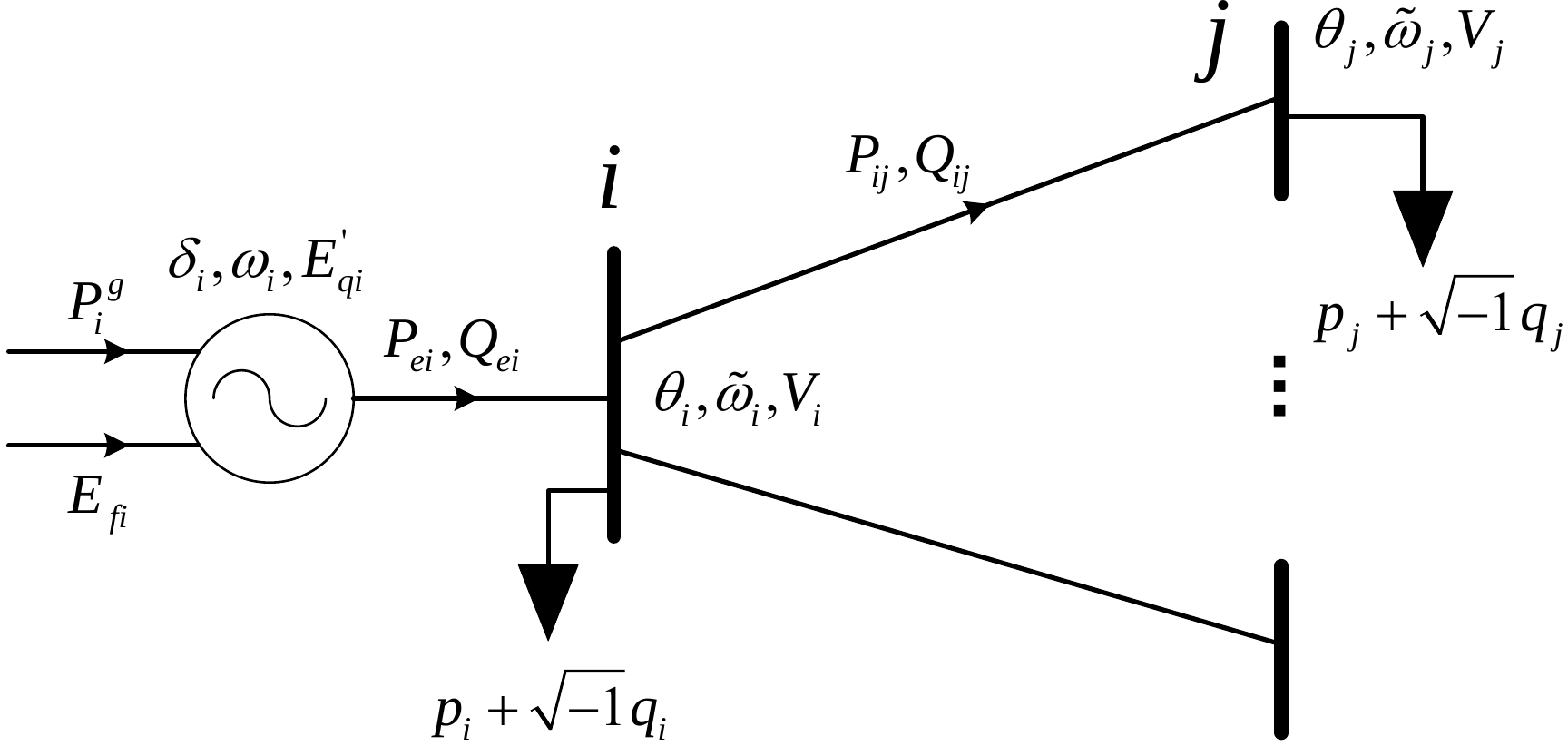}
	\caption{Summary of notations}
	\label{fig:notationsummary}
\end{figure}

For convenience, most notations are summarized in Fig.\ref{fig:notationsummary}. 

\subsection{Synchronous Generators }
For $i\in \mathcal {N}_{G}$, we use the standard third-order generator model (e.g. 
\cite{Stegink:aunifying, Kundur:Power, LIU:Decentralized}) \eqref{eq:SGmodel.1a}-\eqref{eq:SGmodel.1c}. Here \eqref{eq:SGmodel.1d} is the simplified governor-turbine model, and \eqref{eq:SGmodel.1e} is the excitation voltage control model:
\begin{subequations}
	\label{eq:SGmodel}
	\begin{align}
	\dot \delta_i & =  \omega_i
	\label{eq:SGmodel.1a}
	\\
	\dot \omega_i & =   (P^g_i -D_i \omega_i- P_{ei} )/{M_i}
	\label{eq:SGmodel.1b}
	\\
	\dot E^{'}_{qi} & = - {E_{qi}}/{T^{'}_{d0i}} + {E_{fi}}/{T^{'}_{d0i}}
	\label{eq:SGmodel.1c} 
	\\
	\label{eq:SGmodel.1d}
	\dot P_i^g &= - {P_i^g}/{T_i}  + u_i^g 
	\\
	\label{eq:SGmodel.1e}
	\dot E_{fi} & = h(E_{fi},E_{qi})
	\end{align}
	\label{eq:SGmodel.1}
\end{subequations}
In this model, $M_i$ is the moment of inertia; $D_i$ the damping constant;  $T^{'}_{d0i}$  the $d$-axis transient time constant; ${T_i}$  the 
time constant of turbine; $\delta_i$  the power angle of generator $i$; $\omega_i$  the generator frequency deviation compared to an steady state value; $P^g_i$  the mechanical power input; $p_j$  the active load demand; $P_{ei}$  the active power injected to network; $E^{'}_{qi}$ the $q$-axis transient internal voltage; $E_{qi}$  the $q$-axis internal voltage; $E_{fi}$ the excitation voltage. 
$E_{qi}$ is given by
\begin{align}
	E_{qi}=\frac{x_{di}}{x^{'}_{di}}E^{'}_{qi}-\frac{x_{di}-x_{di}^{'}}{x^{'}_{di}}V_i\cos(\delta_i-\theta_i)
\end{align}
where $x_{di}$ is the $d$-axis synchronous reactance, and $x_{di}^{'}$ is the $d$-axis transient reactance.

The active and reactive power (denoted by $Q_{ei}$) injection to the network are
\begin{subequations}
	\label{eq:Model.PQ}	
	\begin{align}
		\label{Pei}
		&P_{ei}=\frac{E^{'}_{qi}V_i}{x_{di}^{'}}\sin(\delta_i-\theta_i)\\
		\label{Qei}
		&Q_{ei}=\frac{V_i^2}{x^{'}_{di}}-\frac{E^{'}_{qi}V_i}{x_{di}^{'}}\cos(\delta_i-\theta_i)
	\end{align}
\end{subequations}

For controllable generators $i\in \mathcal {N}_{CG}$, the capacity limits are 
\begin{align}
	\underline{P}_i^g \le {P}_i^g \le \overline{P}_i^g
\end{align}
where $\underline{P}_i^g, \overline{P}_i^g$ are lower and upper limits of ${P}_i^g$.

\subsection{Dynamics of Voltage Phase Angles}
To build a network-preserving power system model,  relation between generators and power network  should be explicitly established. In this paper, loads for bus $i\in\mathcal {N}$ are simply modeled as constant active and reactive power injections. Then  the following equations are used to dictate the power balance and voltage phase-angle dynamics at each bus:
\begin{subequations}
	\label{load power}
	\begin{align}
		\dot{\theta}_{i}&=\tilde\omega_i,\ \  i\in\mathcal {N} \\
		\label{load activepower}
		0&=P_{ei}-\tilde D_i\tilde\omega_i-p_{i}-\sum\nolimits_{j \in N_i} {P_{ij} }, \ \ i\in \mathcal{N}_{G}   
		\\
		\label{load activepower2}
		0&=-\tilde D_i\tilde\omega_i-p_{i}-\sum\nolimits_{j \in N_i} {P_{ij} }, \ \ i\in \mathcal{N}_{L}   \\  
		0&=Q_{ei}-q_{i}-{ \sum\nolimits_{j \in N_i}Q_{ij} }, \ \ i\in \mathcal{N}_{G} \\
		0&=-q_{i}-{ \sum\nolimits_{j \in N_i}Q_{ij} } , \ \ i\in \mathcal{N}_{L}
		\label{reactive_power}
	\end{align}
\end{subequations}
where, $p_{i}, q_{i}$ are active and reactive load demands, respectively; $\tilde{\omega}_i$  the frequency deviation at bus $i$; $N_i$  the set of buses connected directly to bus $i$; $\tilde D_i$ the damping constant at  bus $i$; $\tilde D_i\tilde\omega_i$  the change of  frequency-sensitive load \cite{Changhong:Design}. 

In power system, line power flows are mainly related to power angle difference between two buses rather than the power angles independently. Then, we define new variables to denote angle differences as $\eta_{ii}:=\delta_i-\theta_i, i \in {\cal N}_{G}$ and $\eta_{ij}:=\theta_i-\theta_j, i, j \in {\cal N}$. The time derivative of $\eta_{ii}$ and $\eta_{ij}$ are 
\begin{subequations}
	\begin{align}
		\dot \eta_{ii} & =  \omega_i - \tilde{\omega}_i, \ i \in {\cal N}_{G}
		\label{eq:closedloop.1a1}
		\\
		\dot \eta_{ij} & =  \tilde{\omega}_i - \tilde{\omega}_j, \ i, j \in {\cal N}, i\neq j
		\label{eq:closedloop.1a2}
	\end{align}	
	respectively. In the following analysis, we use $\eta_{ii}$ and $\eta_{ij}$ as new state variables instead of $\delta_i$ and $\theta_{i}$.
		\label{eq:Model.eta}
\end{subequations}

To  summarize, $\eqref{line power} - \eqref{eq:Model.PQ}, \eqref{load activepower}-\eqref{reactive_power}, \eqref{eq:closedloop.1a1}, \eqref{eq:closedloop.1a2}$ constitute the nonlinear network preserving model of power systems, which is in a form of differential-algebraic equations (DAE).

\subsection{Communication Network}
In this paper, we consider a communication graph among the buses of controllable generators only. Denote $E\subseteq \mathcal {N}_{CG} \times \mathcal {N}_{CG}$ as the set of communication links. 
If generator $i$ and $j$ can communicate directly to each other, we denote $(i,j)\in E$. Obviously, edges in the communication graph $E$ can be different from lines in the power network $\mathcal{E}$. For the communication network, we make the following assumption:
\begin{itemize}
	\item [\textbf{A1:}] The communication graph $E$ is undirected and connected.
\end{itemize}

\section{Formulation of Optimal Frequency Control}
\subsection{Optimal Power-Sharing Problem in Frequency Control}
The purpose of optimal frequency control is to let all the controllable generators share power mismatch economically when restoring frequency. Then we have the following optimization formulation, denoted by SFC.
\begin{subequations}
	\label{SFC}
	\begin{align}
		\label{objective function}
		\text{SFC:}\ \ \ &\mathop {\min }\limits_{P_i^g, i\in {\cal {N}}_{CG}} 
		\ \ \sum\nolimits_{i\in {\cal {N}}_{CG}} f_i(P_i^g) \\
		\label{eq:power balance}
		&\text{s.t.}\quad \sum\limits_{i\in {\cal {N}}_{CG}} P_i^g = \sum\limits_{i\in \cal {N}} p_i - \sum\limits_{i\in {\cal {N}}_{UG}} P_i^{g*}\\
		& \quad\quad\ \  \underline{P}_i^g \le {P}_i^g \le \overline{P}_i^g, \quad i\in {\cal {N}}_{CG}
	\end{align}
\end{subequations}
where $P_i^{g*}$ is the mechanical power of uncontrollable generator in the steady state.
In (\ref{objective function}), $f_i(P_i^g)$ concerns the controllable generation $P_i^g$, satisfying the following assumption:
\begin{itemize}
	\item [\textbf{A2:}] The objective $f_i(P_i^g)$ is second-order continuously differentiable, strongly convex and $f^{'}_i(P_i^g)$ is Lipschitz continuous with Lipschitz constant $l_i>0$. i.e. $\exists\ \alpha_i >0, \alpha_i\le f_i^{''}(P_i^g)\le l_i $.
\end{itemize}

\textcolor{black}{To ensure} the feasibility of the optimization problem, we make an additional assumption.
{\color{black}
\begin{itemize}
	\item [\textbf{A3:}]  The system satisfies
	\begin{align}
		\sum\limits_{i\in {\cal {N}}_{CG}} \underline P_i^g \le \sum\limits_{i\in \cal {N}} p_i  - \sum\limits_{i\in {\cal {N}}_{UG}} P_i^{g*} \le \sum\limits_{i\in {\cal {N}}_{CG}} \overline P_i^g
		\label{eq:A3}
	\end{align} 
\end{itemize}
Specifically, we say A3 is \emph{strictly satisfied} if all the inequalities  in \eqref{eq:A3} \emph{strictly} hold.}
\subsection{Equivalent Optimization Model with Virtual Load Demands}
In (\ref{eq:power balance}), load demands are injected to every bus, which sometimes cannot be measured accurately if at all. As a consequence, the values of $p_i$ may be unknown to both the controllable generators $i$, $i\in {\cal {N}}_{CG}$ and the uncontrollable generators $i$, $i\in {\cal {N}}_{UG}$. To circumvent such an obstacle in design, we introduce a set of new variables, $\hat p_i$, to re-formulate SFC as the following equivalent problem:
\begin{subequations}
	\label{ESFC}
	\begin{align}
	\label{objective function2}
	\text{ESFC:}\ \ \ &\mathop {\min }\limits_{P_i^g, i\in {\cal {N}}_{CG}}\ \ \sum\nolimits_{i\in {\cal {N}}_{CG}} f_i(P_i^g) \\
	\label{eq:power balance2}
	&\text{s.t.}\quad \sum\nolimits_{i\in {\cal {N}}_{CG}} P_i^g = \sum\nolimits_{i\in {\cal {N}}_{CG}}\hat p_i \\
	& \quad\quad\ \  \underline{P}_i^g \le {P}_i^g \le \overline{P}_i^g, \quad i\in {\cal {N}}_{CG}
	\end{align}
\end{subequations}
where $\hat p_i$ is the \textit{virtual load demand} supplied by generator $i$ in the steady state, which is a constant, satisfying 
$\sum\nolimits_{i\in {\cal {N}}_{CG}}\hat p_i = \sum\nolimits_{i\in \cal {N}} p_i- \sum\nolimits_{i\in {\cal {N}}_{UG}} P_i^{g*}$.  Obviously, the number of virtual loads should be equal to that of the controllable generators. 

Note that the power balance constraint \eqref{eq:power balance} only requires
that all the generators supply all the loads while it is not necessary to figure out which loads are supplied exactly by which generators. Hence we treat virtual load demands $\hat p_i$ as the effective demands 
supplied by generator $i$ for dealing  with the issue that only a part of generators are controllable. 

Simply letting $\sum\nolimits_{i\in {\cal {N}}_{CG}}\hat p_i = \sum\nolimits_{i\in \cal {N}} p_i- \sum\nolimits_{i\in {\cal {N}}_{UG}} P_i^{g*}$, we immediately have the following Lemma:
\begin{lemma}
	\label{equivalent}
	The problems SFC \eqref{SFC} and ESFC \eqref{ESFC}
	have the same optimal solutions.
\end{lemma}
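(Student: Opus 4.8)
The plan is to show that SFC and ESFC are, once the defining relation of the virtual loads is invoked, the \emph{same} optimization problem, so that equality of their optimal solution sets is immediate. First I would note that the two problems share exactly the same decision variables $P_i^g$, $i\in\mathcal N_{CG}$, the same objective $\sum_{i\in\mathcal N_{CG}} f_i(P_i^g)$, and the same box constraints $\underline P_i^g \le P_i^g \le \overline P_i^g$ for $i\in\mathcal N_{CG}$. Hence the only thing that could possibly differ between SFC and ESFC is the single scalar equality constraint.

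Next I would compare those two equality constraints. In SFC it reads $\sum_{i\in\mathcal N_{CG}} P_i^g = \sum_{i\in\mathcal N} p_i - \sum_{i\in\mathcal N_{UG}} P_i^{g*}$, whereas in ESFC it reads $\sum_{i\in\mathcal N_{CG}} P_i^g = \sum_{i\in\mathcal N_{CG}} \hat p_i$. Since the virtual loads $\hat p_i$ are, by construction, constants chosen so that $\sum_{i\in\mathcal N_{CG}} \hat p_i = \sum_{i\in\mathcal N} p_i - \sum_{i\in\mathcal N_{UG}} P_i^{g*}$, the right-hand sides of the two equality constraints are identical. Therefore the two constraints define the same affine subspace of $\mathbb{R}^{n_{CG}}$, and the feasible sets of SFC and ESFC coincide.

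Finally, since SFC and ESFC have the same feasible set and the same objective function, a point is optimal for one if and only if it is optimal for the other, which proves the lemma. I do not anticipate any genuine obstacle; the only point worth stating explicitly is that the $\hat p_i$ enter ESFC as fixed parameters rather than as additional decision variables, so the reformulation truly leaves the optimization unchanged — it merely repackages the aggregate net demand $\sum_{i\in\mathcal N} p_i - \sum_{i\in\mathcal N_{UG}} P_i^{g*}$ into per-generator shares that can later be estimated from local measurements.
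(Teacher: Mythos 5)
Your proposal is correct and matches the paper's reasoning: the paper treats the lemma as immediate precisely because the virtual loads are defined so that $\sum_{i\in\mathcal N_{CG}}\hat p_i = \sum_{i\in\mathcal N} p_i - \sum_{i\in\mathcal N_{UG}} P_i^{g*}$, making the two equality constraints (and hence the feasible sets and objectives) identical. You have simply spelled out the same observation in more detail.
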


\section{Controller Design}
\subsection{Distributed Frequency Control of Controllable Generators}

\subsubsection{Controller design based on  primal-dual gradient algorithm}
Invoking the primal-dual gradient algorithm, the Lagrangian of the ESFC \eqref{ESFC} is given by
\begin{align}
	L=&\sum\nolimits_{i\in {\cal {N}}_{CG}} f_i(P_i^{g})  + \mu\left(\sum\nolimits_{i\in {\cal {N}}_{CG}} P_i^g - \sum\nolimits_{i\in {\cal {N}}_{CG}}\hat p_i\right) \nonumber\\
	& + \gamma _i^-(\underline{P}_i^g - {P}_i^g ) + \gamma _i^+ ( {P}_i^g - \overline{P}_i^g), \quad\quad i\in {\cal {N}}_{CG}
\end{align}
where $\mu, \gamma _i^-, \gamma _i^+$ are Lagrangian multipliers. Based on   primal-dual update, the controller for $i\in {\cal {N}}_{CG}$ is designed as
\begin{subequations}
	\label{eq:controller0}	
	\begin{align}
	\label{eq:controlmodel0}
	u^g_i & ={P^g_i}/{T_i}  -{k_{P_i^g}}\left(\omega_i  +  (f_i^{'}(P_i^g)+\mu-\gamma _i^- + \gamma _i^+)\right) \\
	\dot \mu & = k_{\mu} \bigg(\sum\nolimits_{i\in {\cal {N}}_{CG}} P_i^g - \sum\nolimits_{i\in {\cal {N}}_{CG}}\hat p_i \bigg) 	
	\label{eq:hatpi0} \\
	\label{controller_d0}
	\dot \gamma _i^- & = k_{\gamma_i} \left[\underline{P}_i^g - {P}_i^g \right]_{\gamma _i^-}^+ \\
	\label{controller_e0}
	\dot \gamma _i^+ & = k_{\gamma_i}\left[{P}_i^g - \overline{P}_i^g \right]_{\gamma _i^+}^+
	\end{align}
\end{subequations}
where $k_{P_i^g}, k_{\mu}, k_{\gamma_i}$ are positive constants; $u^g_i$ the control input; $f_i^{'}(P_i^g)$ the marginal cost at $P_i^g$. For any $x_i$, $a_i \in \mathbb R$, the operator is defined as: $[x_i]^+_{a_i}=x_i$ if $a_i>0\  \text{or}\ x_i>0$; and $[x_i]^+_{a_i}=0$ otherwise.
\subsubsection{Estimating $\mu$ by second-order consensus} In \eqref{eq:hatpi0}, $\mu$ is a global variable, which is a function of mechanical powers and loads of the entire system. To circumvent the obstacle in implementation, a second-order consensus based method is utilized to estimate $\mu$ locally by using neighboring information only. Specifically, for $i\in \mathcal N_{CG}$, the controller  is revised to:
\begin{subequations}
	\label{eq:controller}	
	\begin{align}
		\label{eq:controlmodel}
		u^g_i & ={P^g_i}/{T_i}  -{k_{P_i^g}}\left(\omega_i  +  f_i^{'}(P_i^g)+\mu_i-\gamma _i^- + \gamma _i^+\right) \\
		\dot \mu_i & = k_{\mu_i} \bigg(P_i^g - \hat p_i - \sum\limits_{j \in {N_{ci}}} {\left( {{{ \mu }_i} - {{\mu }_j}} \right)}  - \sum\limits_{j \in {N_{ci}}} {{z_{ij}}}   \bigg) 	
		\label{controller_b} \\
		\label{controller_c}
		\dot z_{ij} &= k_{z_i} (\mu_i-\mu_j) \\
		\label{controller_d}
		\dot \gamma _i^- & = k_{\gamma_i} \left[\underline{P}_i^g - {P}_i^g \right]_{\gamma _i^-}^+ \\
		\label{controller_e}
		\dot \gamma _i^+ & = k_{\gamma_i}\left[{P}_i^g - \overline{P}_i^g \right]_{\gamma _i^+}^+
	\end{align}
\end{subequations} 
where, $ k_{\mu_i}, k_{z_i}$ are positive constants; $N_{ci}$  the set of  neighbors of bus $i$ in the communication graph; 
$\mu_i$ the local estimation of $\mu$. Here, \eqref{controller_b} and \eqref{controller_c} are used to estimate $\mu$ locally, where only neighboring information is needed. $z_{ij}$ is an auxiliary variable to guarantee the consistency of all $\mu_i$.

For the Larangian multiplier $\mu$, $-\mu$ is often regarded as the marginal cost of generation. Theoretically, $-\mu_i$ should reach consensus for all the generators in the steady state. Since $\dot \mu_i = 0$ holds in the steady state, we have $P_i^g - \hat p_i - \sum\nolimits_{j \in {N_{ci}}} {{z_{ij}}}  =0$. Hence, $z_{ij}$ can be regarded as the virtual line power flow of edge $(i,j)$ in the communication graph.

\subsection{Active Power Dynamics of Uncontrollable Generators}
To guarantee  system stability, a sufficient condition is given for the active power dynamics of uncontrollable generators. 
\begin{itemize}
	\item [\textbf{C1:}] The active power dynamics of uncontrollable generators are strictly incrementally output passive in terms of the input $-\omega_i$ and output $P_{i}^g$, i.e., there exists a continuously differentiable, positive semidefinite function $S_{\omega_i}$ such that  
	\begin{align}
	\dot S_{\omega_i} \le \left(-\omega_i-(-\omega_i^*)\right)\left(P_{i}^g-P_{i}^{g*}\right)  - \phi_{\omega_i}\ (P_{i}^g-P_{i}^{g*}) \nonumber
	\end{align}
	where $\phi_{\omega_i}$ is a positive definite function, and $\phi_{\omega_i}=0$ holds only when $P_{i}^g=P_{i}^{g*}$.
\end{itemize}
The condition C1 on the active power dynamics of uncontrollable generators is easy to verify. As an example,  the  commonly-used primary frequency controller 
\begin{align}
\label{droopcontrol}
	u_i^g=-\omega_i+\omega_i^*-k_{\omega_i}(P_i^g-P_i^{g*})+{P^g_i}/{T_i}
\end{align}
 satisfies C1 whenever $k_{\omega_i}>0$. In this case, we have $S_{\omega_i}=\frac{k_1}{2}(P_i^g-P_i^{g*})^2$ with $k_1>0$ and $\phi=k_2(P_i^g-P_i^{g*})^2$ with $0<k_2\le k_{\omega_i} \cdot k_1$.

\subsection{Excitation Voltage Dynamics of All Generators}
Similar to the uncontrollable generators, the following sufficient condition on excitation voltage dynamics of all generators is needed to guarantee system stability, since we do not design specific excitation voltage controllers here.
\begin{itemize}
	\item [\textbf{C2:}] The excitation voltage dynamics are strictly incrementally output passive in terms of the input $-E_{qi}$ and output $E_{fi}$, i.e., there exists continuously differentiable, positive semidefinite function $S_{E_i}$ such that  
	\begin{align}
	\dot S_{E_i} \le \left(-E_{qi}-(-E^*_{qi})\right)\left(E_{fi}-E^*_{fi}\right)  - \phi_{E_i}\ (E_{fi}-E^*_{fi}) \nonumber
	\end{align}
	where $\phi_{E_i}$ is a positive definite function, and $\phi_{E_i}=0$ holds only when $E_{fi}=E^*_{fi}$.
\end{itemize}
C2 is also easy to satisfy. As an example, it can be verified that the controller given in \cite{LIU:Decentralized}
\begin{align}
\label{voltage controller}
	h(E_{fi},E_{qi})=-E_{fi}+E^*_{fi}-k_{E_i}(E_{qi}-E^*_{qi})
\end{align}
with $k_{E_i}>0$ satisfies C2. In this case, $S_{E_i}=\frac{k_3}{2}(E_{fi}-E^*_{fi})^2$ with $k_3>0$ and $\phi=k_4(E_{fi}-E^*_{fi})^2$ with $0<k_4\le k_{E_i} \cdot k_3$.

\section{Optimality and Stability }
After implementing the controller on the physical power system, the closed-loop system reads
\begin{align}
\label{eq:closedloop}
\left\{ \begin{array}{l}
\eqref{line power} - \eqref{eq:Model.PQ}, 
\eqref{load activepower}-\eqref{reactive_power}, 
\eqref{eq:closedloop.1a1},\ \eqref{eq:closedloop.1a2} \\
\eqref{eq:controlmodel}-\eqref{controller_e}
\end{array} \right.
\end{align}

In this section, we  prove the optimality and stability of the closed-loop system \eqref{eq:closedloop}.
\subsection{Optimality}
Denote the trajectory of closed-loop system as $v(t)=\left(\eta(t), \omega(t), \tilde{\omega}(t), P^{g}(t), \mu(t), z(t), \gamma^-(t), \gamma^+(t), E_{q}^{'}(t), V(t)\right)$. 
Define the equilibrium set of \eqref{eq:closedloop} as
\begin{align}
{\cal V}:=\{ v^*|v^* \text{is an equilibrium of}\ \eqref{eq:closedloop} \}
\end{align}
We first present the following Theorem.
\begin{theorem}
	\label{thm:optimality}
	Suppose assumptions A1, A2 and A3 hold. In equilibrium of \eqref{eq:closedloop}, following assertions are true.
	\begin{enumerate}
		\item The mechanical powers $P_i^g$ satisfy $\underline{P}_i^g \le {P}_i^{g*} \le \overline{P}_i^g $,  $\forall i \in \mathcal {N}_{CG}$. 
		\item System frequency recovers to the nominal value, i.e. $\omega_i^*=0, \forall i \in \mathcal {N}_{CG} \cup \mathcal {N}_{UG}$, and $\tilde\omega_i^*=0, \forall i \in \cal N$.
		\item The marginal generation costs satisfy $f_i^{'}(P_i^{g*}) - \gamma _i^{-*} + \gamma _i^{+*} = f_j^{'}(P_j^{g*}) - \gamma _j^{-*} + \gamma _j^{+*}, i,j \in {\cal N}_{CG}$.
		\item $P_i^{g*}$ is the unique optimal solution of SFC problem (\ref{SFC}).
		\item  $\mu_i^*$ is unique if A3 is strictly satisfied. 
	\end{enumerate}
\end{theorem}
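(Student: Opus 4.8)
The plan is to characterize the equilibrium set $\mathcal V$ by setting every time derivative in the closed-loop system \eqref{eq:closedloop} to zero and reading off the resulting algebraic relations; conditions C1 and C2 play no role here, since they constrain the transient dynamics rather than the equilibria. I would begin with the ``kinematic'' identities. Setting $\dot\eta_{ii}=0$, $\dot\eta_{ij}=0$ in \eqref{eq:closedloop.1a1}--\eqref{eq:closedloop.1a2} and using connectivity of $\mathcal G$ gives a common value $\tilde\omega^*$ with $\tilde\omega_i^*=\tilde\omega^*$ for all $i\in\mathcal N$ and $\omega_i^*=\tilde\omega^*$ for all $i\in\mathcal N_G$; setting $\dot z_{ij}=0$ in \eqref{controller_c} and invoking A1 gives a common $\mu_i^*=\mu^*$. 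Assertion~1 follows from $\dot\gamma_i^{-}=\dot\gamma_i^{+}=0$ in \eqref{controller_d}--\eqref{controller_e}: by the definition of $[\,\cdot\,]^{+}$, vanishing of a projected right-hand side forces $\underline P_i^g-P_i^{g*}\le 0$ and $P_i^{g*}-\overline P_i^g\le 0$; the same dynamics keep $\gamma_i^{-*},\gamma_i^{+*}\ge 0$, and if $\gamma_i^{-*}>0$ then necessarily $P_i^{g*}=\underline P_i^g$ (symmetrically for $\gamma_i^{+*}$), which records the complementary-slackness relations for later use.

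For Assertion~2, I would use $\dot\omega_i=0$ in \eqref{eq:SGmodel.1b} to write $P_{ei}^*=P_i^{g*}-D_i\omega_i^*$, substitute into the nodal active-power balances \eqref{load activepower}--\eqref{load activepower2}, and sum over all $i\in\mathcal N$. The line flows cancel in pairs since $P_{ij}=-P_{ji}$ (using $B_{ij}=B_{ji}$), leaving $\sum_{i\in\mathcal N_G}P_i^{g*}=\tilde\omega^*\big(\sum_{i\in\mathcal N_G}D_i+\sum_{i\in\mathcal N}\tilde D_i\big)+\sum_{i\in\mathcal N}p_i$. Separately, $\dot P_i^g=0$ in \eqref{eq:SGmodel.1d} with \eqref{eq:controlmodel} and $\dot\mu_i=0$ in \eqref{controller_b} (using $\mu_i^*=\mu^*$) give $P_i^{g*}-\hat p_i-\sum_{j\in N_{ci}}z_{ij}^*=0$ for each $i\in\mathcal N_{CG}$; summing over $\mathcal N_{CG}$, using $\sum_i\sum_{j\in N_{ci}}z_{ij}^*=0$ together with the defining relation $\sum_{i\in\mathcal N_{CG}}\hat p_i=\sum_{i\in\mathcal N}p_i-\sum_{i\in\mathcal N_{UG}}P_i^{g*}$ of the virtual loads, yields $\sum_{i\in\mathcal N_G}P_i^{g*}=\sum_{i\in\mathcal N}p_i$. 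Comparing the two identities forces $\tilde\omega^*\big(\sum D_i+\sum\tilde D_i\big)=0$, hence $\tilde\omega^*=0$ and $\omega_i^*=\tilde\omega_i^*=0$ everywhere. The reactive-power and voltage equations are not used here; they merely fix the remaining components of the equilibrium.

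Assertion~3 is then immediate: $\dot P_i^g=0$ with \eqref{eq:controlmodel}, $\omega_i^*=0$ and $\mu_i^*=\mu^*$ give $f_i'(P_i^{g*})-\gamma_i^{-*}+\gamma_i^{+*}=-\mu^*$ for all $i\in\mathcal N_{CG}$, which is $i$-independent. For Assertion~4 I would collect the relations already in hand — stationarity $f_i'(P_i^{g*})+\mu^*-\gamma_i^{-*}+\gamma_i^{+*}=0$, primal feasibility $\sum_{i\in\mathcal N_{CG}}P_i^{g*}=\sum_{i\in\mathcal N_{CG}}\hat p_i$ and $\underline P_i^g\le P_i^{g*}\le\overline P_i^g$, dual feasibility $\gamma_i^{\pm*}\ge 0$, and complementary slackness — and recognize them as exactly the KKT system of the convex program ESFC~\eqref{ESFC}. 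Since each $f_i$ is strongly convex by A2 and the feasible set is a nonempty polyhedron by A3, ESFC has a unique minimizer, so $P^{g*}$ is it, and by Lemma~\ref{equivalent} $P^{g*}$ is the unique optimal solution of SFC~\eqref{SFC}. For Assertion~5, under strict A3 I would argue that the optimal dispatch cannot be fully pinned at capacity bounds, so some $i$ has $\underline P_i^g<P_i^{g*}<\overline P_i^g$; complementary slackness then gives $\gamma_i^{-*}=\gamma_i^{+*}=0$, so $\mu^*=-f_i'(P_i^{g*})$ is determined by the unique $P_i^{g*}$, whence $\mu_i^*=\mu^*$ is unique.

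The step I expect to be most delicate is the coupling in Assertion~2: rigorously justifying that $\sum_i\sum_{j\in N_{ci}}z_{ij}^*=0$ at equilibrium, i.e. that the second-order consensus coupling \eqref{controller_b}--\eqref{controller_c} does not sustain a spurious ``circulating flow'' on the communication graph that would bias $\tilde\omega^*$ away from $0$; this identity is precisely the bridge between the physical power-balance equation and the communication-layer constraint $\sum_{i\in\mathcal N_{CG}}P_i^{g*}=\sum_{i\in\mathcal N_{CG}}\hat p_i$, and it should follow from the skew structure of the $z$-dynamics together with the initialization of the $z_{ij}$. A secondary point of care is Assertion~5, where one must ensure that strict A3 indeed forces the optimum off the vertex at which all box constraints are simultaneously active — the configuration in which $\mu^*$ would otherwise only be confined to an interval rather than uniquely pinned down.
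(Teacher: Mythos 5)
Your proposal is correct and follows essentially the same route as the paper's own proof: read off the equilibrium relations ($\mu$-consensus from $\dot z_{ij}=0$, frequency consensus from $\dot\eta=0$, the aggregate identity from summing the $\mu_i$-dynamics with the virtual-load definition, the physical power balance forcing $\omega_0=0$, stationarity from $\dot P_i^g=0$), then verify the KKT system of the dispatch problem and invoke strong convexity plus Slater/Lemma~\ref{equivalent} for assertion 4, and pick a generator with inactive multipliers for assertion 5. The two points you flag as delicate (the cancellation $\sum_i\sum_{j\in N_{ci}}z_{ij}^*=0$, which rests on the antisymmetry convention $z_{ij}=-z_{ji}$, and the existence under strict A3 of some $i$ with $\gamma_i^{-*}=\gamma_i^{+*}=0$) are treated no more rigorously in the paper itself, and your bookkeeping of the $D_i\omega$ damping terms in the equilibrium power balance is in fact slightly more careful than the paper's equation (A.2).
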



The detailed proof of Theorem \ref{thm:optimality} is given in Appendix.A of this paper. It shows that the nominal frequency is recovered at equilibrium, and marginal generation costs are identical for all controllable generators, implying the optimality of any equilibrium. 

\subsection{Stability}
In this section, the stability of the closed-loop system (\ref{eq:closedloop}) is proved. 
First we define a function as
\begin{align}
\label{Lagrangian L}
\hat{L}&: = \sum\nolimits_{i\in {\cal {N}}_{CG}} f_i(P_i^{g*})  + \sum\nolimits_{i\in {\cal {N}}_{CG}} \mu_i (P_i^g - \hat p_i ) \nonumber\\
&- \sum\nolimits_{i \in {{\cal {N}}_{CG}}} {{{ \mu }_i}{z_{ij}}}  - \frac{1}{2}\sum\nolimits_{i \in {\cal {N}}_{CG}} \left({ \mu }_i\sum\nolimits_{j \in {N_i}} {\left( {{{ \mu }_i} - {{ \mu }_j}} \right)} \right) \nonumber\\
& + \sum\limits_{i\in {\cal {N}}_{CG}} \gamma _i^-(\underline{P}_i^g - {P}_i^g ) + \sum\limits_{i\in {\cal {N}}_{CG}} \gamma _i^+ ( {P}_i^g - \overline{P}_i^g)
\end{align}
Denote $x_1 := ( P^g)$, $x_2 := ( \mu, z, \gamma_i^{-}, \gamma_i^{+})$, $x := (x_1, x_2)$. Then $\hat L(x_1, x_2)$ is convex in $x_1$ and concave in $x_2$.

Before giving the main result, we construct a Lyapunov candidate function composed of four parts: the quadratic part, the potential energy part, conditions C1 and C2 related parts, as we now explain.

For $i\in {\cal N}_G$, the quadratic part is given by 
\begin{align}
\label{kinetic function}
W_k(\omega, x)=&\sum\limits_{i \in {\cal N}_{G}}\frac{1}{2}M_i(\omega_i-\omega_i^*)^2  + \frac{1 }{2} (x-x^*)^TK^{-1}(x-x^*)
\end{align}
where $K=\text{diag}(k_{P_i^g}, k_{\mu_i}, k_{z_i}, k_{\gamma_i})$ is a diagonal positive definite matrix.

Denoting $x_p=(E_{qi}^{'},V_i,\delta_i,\theta_i)$, the potential energy part is 
\begin{equation}
\label{potential function}
W_p(x_p)=\tilde W_p(x_p)-(x_p-x_p^*)^T\nabla_{x_p}\tilde W_p(x_p^*)-\tilde W_p(x_p^*)
\end{equation}
where, 
\begin{align}
\label{potential function0}
&\tilde W_p(E_{qi}^{'},E_{i},V_i,\delta_i,\theta_i)=\sum\nolimits_{i \in {\cal N}} \frac{1}{2}B_{ii}V_i^2 + \sum\nolimits_{i \in \cal {\cal N}} p_i\theta_i \nonumber\\
&\ - \sum\limits_{i \in {\cal {N}}} q_i\text{ln}\ V_i - \frac{1}{2}\sum\limits_{i \in {\cal N}}\sum\limits_{j \in N_i} {{V_i}{V_j}{B_{ij}} \cos \left( {{\theta _i} - {\theta _j}} \right)} \\
&\ -\sum\limits_{i \in {\cal N}_{G}} \frac{E^{'}_{qi}V_i}{x_{di}^{'}}\cos(\delta_i-\theta_i)  +\sum\limits_{i \in {\cal N}_{G}}\frac{x_{di}}{2x^{'}_{di}(x_{di}-x_{di}^{'})}\left(E^{'}_{qi}\right)^2  \nonumber
\end{align}

Conditions C1 and C2 related parts are $\sum\nolimits_{i\in {\cal {N}}_{UG}} S_{\omega_i}$ and $\sum\nolimits_{i\in {\cal {N}}_{G}}\frac{1}{{T_{d0i}^{'}(x_{di}-x_{di}^{'})}} S_{E_i}$ respectively.

The Lyapunov function is defined as 
\begin{align}
\label{Lyapunov}
W=&W_k+W_p  +\sum\limits_{i\in {\cal {N}}_{UG}} S_{\omega_i} + \sum\limits_{i\in {\cal {N}}_{G}}\frac{S_{E_i}}{{T_{d0i}^{'}(x_{di}-x_{di}^{'})}} 
\end{align}

Then, we give the following assumption. 

{\color{black}
	\noindent\textbf{A4:} The Hessian of $W_p$ satisfies $\nabla^2_v W_p(v)>0$ at desired equilbirium.
}

Since the voltage phase deviation between two neighboring buses  is not large in practice, A4 is usually satisfied. Detailed explanations can be found in Appendix.B of this paper.

The following stability result can be obtained.
\begin{theorem}
	\label{thm:stability}
	Suppose A1--A4 and C1, C2 hold. For every $v^*$, there exists a neighborhood $\cal S$ around $v^*$ where all trajectories $v(t)$	satisfying \eqref{eq:closedloop} starting in $\cal S$ converge to the set ${\cal V}$. In addition,  each  trajectory converges to an equilibrium point.
\end{theorem}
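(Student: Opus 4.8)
The plan is to use the function $W$ in \eqref{Lyapunov} as a Lyapunov function for the closed-loop DAE \eqref{eq:closedloop} and to conclude by LaSalle's invariance principle. I would first verify that $W$ is well defined near $v^*$ and attains a strict local minimum, equal to $W(v^*)$, there. Near $v^*$ the algebraic equations \eqref{load activepower}--\eqref{reactive_power} can be solved for $(\tilde\omega,V)$ as smooth functions of the differential states by the implicit function theorem (the required nonsingularity of the power-flow Jacobian follows from the positive-definiteness of $\nabla^2 W_p$ in A4), so on a neighborhood $\mathcal S$ of $v^*$ the system reduces to an ODE. The quadratic part $W_k$ in \eqref{kinetic function} is positive semidefinite and vanishes exactly when $\omega=\omega^*$ and $x=x^*$; by A4 and \eqref{potential function}, $W_p$ is a Bregman-type distance that is nonnegative on $\mathcal S$ and vanishes only at $x_p=x_p^*$; and $S_{\omega_i},S_{E_i}\ge 0$ by C1, C2. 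Consequently, for small $c>0$ the sublevel set $\Omega_c=\{v\in\mathcal S: W(v)\le c\}$ is compact and contained in $\mathcal S$; once $\dot W\le0$ is established it is also forward invariant.

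Next I would compute $\dot W$ along trajectories of \eqref{eq:closedloop}. The essential ingredient is that $\nabla_{x_p}\tilde W_p$ reproduces the nonlinear AC power injections: $\partial\tilde W_p/\partial\theta_i$ equals the active-power mismatch at bus $i$ and $\partial\tilde W_p/\partial V_i$ the reactive-power mismatch, while $\partial\tilde W_p/\partial\delta_i$ and $\partial\tilde W_p/\partial E^{'}_{qi}$ reproduce $P_{ei}$ and the internal-voltage dynamics \eqref{eq:SGmodel.1c}. Substituting the swing equation \eqref{eq:SGmodel.1b}, the controller \eqref{eq:controlmodel}--\eqref{controller_e}, and the algebraic constraints \eqref{load activepower}--\eqref{reactive_power}, the network cross terms cancel against the terms produced by differentiating $W_p$ and $W_k$. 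The coupling terms involving the uncontrollable-generator power dynamics and the excitation dynamics are absorbed using the incremental-passivity inequalities C1 and C2, contributing $-\phi_{\omega_i}$ and $-\phi_{E_i}/(T^{'}_{d0i}(x_{di}-x^{'}_{di}))$. Finally, the primal--dual gradient terms are bounded using the convexity of $\hat L$ in $x_1$ and concavity in $x_2$ (with the projections $[\cdot]^+$ handled by complementarity), so they are non-positive. Collecting everything yields
\[
\dot W \le -\sum_{i\in\mathcal N_G} D_i(\omega_i-\omega_i^*)^2 - \sum_{i\in\mathcal N}\tilde D_i(\tilde\omega_i-\tilde\omega_i^*)^2 - \sum_{i\in\mathcal N_{UG}}\phi_{\omega_i} - \sum_{i\in\mathcal N_G}\frac{\phi_{E_i}}{T^{'}_{d0i}(x_{di}-x^{'}_{di})} - \rho(P^g) \le 0,
\]
where $\rho(P^g)\ge0$ comes from the strong convexity of the $f_i$ and vanishes only at $P^g=P^{g*}$.

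Then I would apply LaSalle's invariance principle on the compact invariant set $\Omega_c$: every trajectory converges to the largest invariant set $M\subseteq\{v\in\Omega_c:\dot W=0\}$. On $\{\dot W=0\}$ we get $\omega_i=\omega_i^*=0$, $\tilde\omega_i=\tilde\omega_i^*=0$ (Theorem \ref{thm:optimality}(2)), $P_i^g=P_i^{g*}$ for all $i$ (from $\rho$ and $\phi_{\omega_i}$), and $E_{fi}=E_{fi}^*$ (from $\phi_{E_i}$); invariance then forces $\dot x=0$, $\dot\eta=0$ and, through the algebraic equations, that the remaining states $(\eta,\mu,z,\gamma^-,\gamma^+,E^{'}_q,V)$ are constant and satisfy the equilibrium conditions, i.e.\ $M\subseteq\mathcal V$, so $v(t)\to\mathcal V$. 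To upgrade this to convergence to a single point, I would take a limit point $\bar v\in\mathcal V$ of the bounded trajectory; since $\bar v$ lies near $v^*$ it also satisfies A1--A4 and C1, C2, so the Lyapunov function $W_{\bar v}$ built around $\bar v$ exactly as in \eqref{Lyapunov} is non-increasing along $v(t)$, nonnegative on a neighborhood of $\bar v$, and zero at $\bar v$; since $W_{\bar v}(v(t_k))\to0$ along the subsequence converging to $\bar v$, monotonicity gives $W_{\bar v}(v(t))\to0$, and local positive definiteness yields $v(t)\to\bar v$.

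The step I expect to be the main obstacle is the $\dot W\le0$ computation, because it requires simultaneously exploiting (i) the exact matching between $\nabla W_p$ and the nonlinear AC power-flow injections so that all network cross terms cancel, (ii) the incremental-passivity bounds C1 and C2 for the uncontrollable-generator and excitation dynamics, and (iii) the saddle-point structure of the primal--dual controller, including the projection terms and the consensus variables $z_{ij}$, and checking that nothing is left over with an indefinite sign. A secondary technical point is justifying the reduction from the DAE to an ODE near $v^*$ (well-posedness of the algebraic variables), which is where A4 is used beyond guaranteeing local positive-definiteness of the Lyapunov function.
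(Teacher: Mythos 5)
Your proposal is correct and follows essentially the same route as the paper: the same Lyapunov function \eqref{Lyapunov}, cancellation of the network cross terms via the gradient of $W_p$, the saddle-point/convexity bound for the primal--dual terms with the projections handled by complementarity, absorption of the uncontrollable-generator and excitation couplings through C1 and C2, LaSalle's invariance principle under A4, and a final argument for convergence to a single equilibrium. The only differences are cosmetic: you extract a strong-convexity decrement $\rho(P^g)$ (valid under A2) where the paper keeps the $(E_{qi}-E_{qi}^*)^2$ term, you make the DAE-to-ODE reduction explicit, and you prove point convergence directly via a second Lyapunov function centered at a limit point rather than by the paper's contradiction argument with two disjoint sublevel sets.
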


We can further prove that $\nabla^2 W>0$ and $\dot W \le 0$. Moreover,  $\dot W = 0$ holds only at equilibrium point. Then the theorem can be proved using the LaSalle's invariance principle \cite[Theorem 4.4]{Khalil:Nonlinear}. 
Details of the proof are given in  Appendix.B.

\section{Implementation Based on Frequency Measurement}
\subsection{Estimation and Optimality}
Note that virtual load demands $\hat p_i$ used in the controller \eqref{eq:controller} are difficult to directly measure or estimate in practice. Lemma \ref{equivalent} implies that any $\hat p_i$ are valid  as long as $\sum\nolimits_{i\in {\cal {N}}_{CG}}\hat p_i = \sum\nolimits_{i\in \cal {N}} p_i - \sum\nolimits_{i\in {\cal {N}}_{UG}} P_i^{g*}$. Noticing that the power imbalance is very small in  normal operation,  we have $\sum\nolimits_{i\in {\cal {N}}_{CG}}P_{ei} \approx \sum\nolimits_{i\in \cal {N}} \hat p_i$. In fact, they are identical in steady state. Hence, we specify $\hat p_i=P_{ei}$, which implies $P_i^g-\hat p_i= P_i^g-P_{ei}=M_i\dot\omega_i+D_i\omega_i$. That leads to an estimation algorithm of $\mu_i$ 
\begin{align}
\label{load estimation}
\dot \mu_i & = k_{\mu_i} \big(- \sum\limits_{j \in {N_i}} {\left( {{{ \mu }_i} - {{\mu }_j}} \right)}  - \sum\limits_{j \in {N_i}} {{z_{ij}}} + M_i\dot\omega_i+D_i\omega_i \nonumber\\
           &\qquad\qquad +\tau_i (-\mu_i - f_i^{'}(P_i^g)+\gamma _i^- - \gamma _i^+)
\big)
\end{align}
where $0<\tau_i<4/l_i$.
This way, we only need to measure frequencies $\omega_i$ at each bus locally, other than the global load demands.  
Since the controller only needs $\mu_i$ of neighboring buses in the communication graph, it is easy to implement.  

{\color{black}Now, we reconstruct the closed-loop system by replacing  \eqref{controller_b} with \eqref{load estimation} in \eqref{eq:closedloop}, which is 
\begin{align}
	\label{eq:closedloop2}
	\left\{ \begin{array}{l}
	\eqref{line power} - \eqref{eq:Model.PQ}, 
	\eqref{load activepower}-\eqref{reactive_power}, 
	\eqref{eq:closedloop.1a1},\ \eqref{eq:closedloop.1a2} \\
	\eqref{eq:controlmodel}, \eqref{controller_c} -\eqref{controller_e}, \eqref{load estimation}
	\end{array} \right.
\end{align}
We have the following lemma.
\begin{lemma}
	\label{Lemma:optimality}
	Assertions 1)-5) in Theorem \ref{thm:optimality} still hold for the equilibrium of \eqref{eq:closedloop2}.
\end{lemma}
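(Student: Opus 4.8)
The plan is to re-run the equilibrium analysis behind Theorem~\ref{thm:optimality}, exploiting that \eqref{eq:closedloop2} is obtained from \eqref{eq:closedloop} by replacing only the scalar ODE \eqref{controller_b} with \eqref{load estimation} while leaving every other equation untouched. Consequently, at any equilibrium $v^\ast$ of \eqref{eq:closedloop2}, all equilibrium relations coming from \eqref{line power}--\eqref{eq:Model.PQ}, \eqref{load activepower}--\eqref{reactive_power}, \eqref{eq:closedloop.1a1}--\eqref{eq:closedloop.1a2}, and from \eqref{eq:controlmodel}, \eqref{controller_c}--\eqref{controller_e} hold verbatim. In particular: $\dot z_{ij}=0$ together with A1 forces $\mu_i^\ast=\mu^\ast$ for all $i\in\mathcal N_{CG}$; $\dot P_i^g=0$ with \eqref{eq:SGmodel.1d} and \eqref{eq:controlmodel} gives the stationarity relation $\omega_i^\ast+f_i'(P_i^{g\ast})+\mu_i^\ast-\gamma_i^{-\ast}+\gamma_i^{+\ast}=0$; $\dot\gamma_i^{\pm}=0$ and the definition of $[\cdot]^+$ give the complementarity conditions $\underline P_i^g\le P_i^{g\ast}\le\overline P_i^g$, $\gamma_i^{-\ast}(P_i^{g\ast}-\underline P_i^g)=0$, $\gamma_i^{+\ast}(\overline P_i^g-P_i^{g\ast})=0$ (assertion~1); and $\dot\eta_{ii}=\dot\eta_{ij}=0$ with connectedness of $\mathcal G$ gives a common frequency deviation $\omega_i^\ast=\tilde\omega_i^\ast=:\omega^\ast$.

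The step that genuinely must be redone is the consequence of the modified equation. At equilibrium $\dot\omega_i=0$, so \eqref{eq:SGmodel.1b} gives $D_i\omega_i^\ast=P_i^{g\ast}-P_{ei}^\ast$; hence the estimation term $M_i\dot\omega_i+D_i\omega_i$ in \eqref{load estimation} is exactly the quantity $P_i^g-\hat p_i$ of \eqref{controller_b} under the self-consistent identification $\hat p_i:=P_{ei}^\ast$. Using the stationarity relation above, the additional term collapses to $\tau_i(-\mu_i^\ast-f_i'(P_i^{g\ast})+\gamma_i^{-\ast}-\gamma_i^{+\ast})=\tau_i\omega_i^\ast$. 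Thus $\dot\mu_i=0$ reads $(D_i+\tau_i)\omega_i^\ast=\sum_{j\in N_{ci}}(\mu_i^\ast-\mu_j^\ast)+\sum_{j\in N_{ci}}z_{ij}^\ast=\sum_{j\in N_{ci}}z_{ij}^\ast$ after invoking $\mu$-consensus. Summing over $i\in\mathcal N_{CG}$, the $z$-terms telescope to zero by antisymmetry while $\omega_i^\ast=\omega^\ast$, so $\bigl(\sum_{i\in\mathcal N_{CG}}(D_i+\tau_i)\bigr)\omega^\ast=0$; since $\tau_i>0$ the coefficient is strictly positive, whence $\omega^\ast=0$. This is assertion~2, obtained here without even invoking the aggregate power balance.

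Once $\omega^\ast=0$, the term $\tau_i\omega_i^\ast$ vanishes and the full set of equilibrium conditions of \eqref{eq:closedloop2} becomes identical to that of \eqref{eq:closedloop} with the admissible virtual loads $\hat p_i=P_{ei}^\ast$; admissibility, i.e. $\sum_{i\in\mathcal N_{CG}}\hat p_i=\sum_{i\in\mathcal N}p_i-\sum_{i\in\mathcal N_{UG}}P_i^{g\ast}$, follows from summing \eqref{load activepower}--\eqref{load activepower2} with $\tilde\omega^\ast=\omega^\ast=0$, using $P_{ij}=-P_{ji}$ and $P_{ei}^\ast=P_i^{g\ast}$. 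The remaining assertions concern only $P^{g\ast},\gamma^{\pm\ast},\mu^\ast$, and the conditions on them — complementarity (assertion~1), $f_i'(P_i^{g\ast})-\gamma_i^{-\ast}+\gamma_i^{+\ast}=-\mu^\ast$ for all $i\in\mathcal N_{CG}$ (assertion~3, from stationarity and $\omega^\ast=0$), primal feasibility of SFC together with these KKT relations and strong convexity A2 (assertion~4), and strict A3 forcing an interior $P_i^{g\ast}$ (assertion~5) — are verified exactly as in the proof of Theorem~\ref{thm:optimality}, combined with Lemma~\ref{equivalent}.

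The hard part is the bookkeeping in the middle step: one must check that substituting the state-dependent $P_{ei}$ for the constant $\hat p_i$ and the shifted damping $D_i+\tau_i$ for $D_i$ on the controllable buses really does leave the frequency-synchronization and power-balance portions of the Theorem~\ref{thm:optimality} argument intact, and in particular that the auxiliary flows still cancel on summation (the antisymmetry $z_{ij}=-z_{ji}$ implicit in \eqref{controller_c}). The bound $0<\tau_i<4/l_i$ is immaterial at the equilibrium level — it is needed only for the Lyapunov argument — so nothing extra is required there; everything else is a transcription of the equilibrium analysis already carried out for \eqref{eq:closedloop}.
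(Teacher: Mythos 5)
Your proposal is correct and follows essentially the same route as the paper's own proof: use $\dot z_{ij}=0$ for $\mu$-consensus, use the stationarity of $P_i^g$ to turn the $\tau_i(-\mu_i-f_i'(P_i^g)+\gamma_i^--\gamma_i^+)$ term into $\tau_i\omega_i$, sum the modified $\mu_i$-dynamics (with the $z$ and consensus terms cancelling) to get $\bigl(\sum_{i\in\mathcal N_{CG}}(D_i+\tau_i)\bigr)\omega^*=0$ and hence $\omega^*=0$, and then transcribe the remaining assertions from the proof of Theorem~\ref{thm:optimality} with $\hat p_i=P_{ei}^*$. The only cosmetic difference is that you set $\dot\omega_i=0$ before summing while the paper keeps $M_i\dot\omega_i$ and sets it to zero afterwards, which is immaterial.
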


The proof of Lemma \ref{Lemma:optimality} is given in Appendix.C.

\subsection{Discussion on Stability}
Recall \eqref{eq:SGmodel.1b}, then  \eqref{load estimation} is derived to
\begin{align}
	\label{load estimation2}
	\dot \mu_i & = k_{\mu_i} \big( P_i^g -\hat p_i + \hat p_i - P_{ei} - \sum\limits_{j \in {N_i}} {\left( {{{ \mu }_i} - {{\mu }_j}} \right)}  - \sum\limits_{j \in {N_i}} {{z_{ij}}}  \nonumber\\
	&\qquad\qquad +\tau_i (-\mu_i - f_i^{'}(P_i^g)+\gamma _i^- - \gamma _i^+)
	\big)
\end{align}
Denote $\rho_i = \hat p_i - P_{ei}= P_{ei}^* - P_{ei}$, which is the difference of electric power and its value in the steady state. We have the following assumption
\begin{itemize}
	\item [\textbf{A5:}] The disturbance can be written as $\rho_i=\beta_i(t)\omega_i$, where $|\beta_i(t)|\le \bar \beta_i$ and $\bar \beta_i$ is a positive constant. In addition, the set  $\{\ t<\infty\ |\ \omega_i(t)=\omega_i^*\ \}$ has a measure zero.
\end{itemize} Whenever $\omega_i\neq\omega_i^*$, there alway exists such $\beta_i(t)$. A5 argues that $\omega_i(t)=\omega_i^*$ only happens at isolated points except equilibrium. Generally, this is reasonable in power system. 

Denote the state variables of \eqref{eq:closedloop2} and its equilibrium set are $\tilde{v}$ and $\tilde{\cal V}$ respectively. We have following stability result
\begin{theorem}
	\label{thm:stability2}
	Suppose A1--A5, C1, C2 hold and \eqref{eq:A3} is not binding. For every $\tilde v^*$, there exists a neighborhood $\cal S$ around $\tilde v^*$ where all trajectories $\tilde v(t)$ satisfying \eqref{eq:closedloop2} starting in $\cal S$ converge to the set $\tilde {\cal V}$ whenever
	\begin{align}
	\label{beta_range}
		\bar \beta_i<\sqrt{\tau_i D_i(4-\tau_i l_i)}.
	\end{align} 
	Moreover, the convergence of each such trajectory is to a point.
\end{theorem}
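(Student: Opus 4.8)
The plan is to keep the Lyapunov function $W$ of \eqref{Lyapunov}, show that along \eqref{eq:closedloop2} one still has $\dot W\le 0$ on a small neighborhood $\mathcal S$ of $\tilde v^*$ with $\dot W=0$ only at equilibria, and then to close the argument with LaSalle's invariance principle \cite[Theorem~4.4]{Khalil:Nonlinear} together with the point-convergence argument exactly as in Theorem~\ref{thm:stability}. A preliminary reduction uses the hypothesis that \eqref{eq:A3} is not binding: this makes $\mu_i^*$ unique (Theorem~\ref{thm:optimality}(5), transported to \eqref{eq:closedloop2} by Lemma~\ref{Lemma:optimality}), and with the generation limits inactive at the optimum it gives $\gamma_i^{-*}=\gamma_i^{+*}=0$ with $\underline{P}_i^g<P_i^{g*}<\overline{P}_i^g$; shrinking $\mathcal S$ so the limits stay inactive, the projections in \eqref{controller_d}--\eqref{controller_e} keep $\gamma_i^\pm\equiv0$ and drop out entirely.

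The core observation is that \eqref{eq:closedloop2} differs from the system \eqref{eq:closedloop} of Theorem~\ref{thm:stability} only in that \eqref{controller_b} is replaced by \eqref{load estimation2}. Writing $\tilde y:=y-y^*$ for deviations and $\tilde f_i:=f_i'(P_i^g)-f_i'(P_i^{g*})$, and using $\omega_i^*=0$ with the stationarity relation $\omega_i^*+f_i'(P_i^{g*})+\mu_i^*=0$, the derivative of $W$ along \eqref{eq:closedloop2} equals its derivative along \eqref{eq:closedloop} plus the single extra term $\sum_{i\in\mathcal N_{CG}}\tilde\mu_i(\rho_i-\tau_i\tilde\mu_i-\tau_i\tilde f_i)$. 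From the Appendix-B computation for \eqref{eq:closedloop}, the $\dot\omega_i$ and $\dot P_i^g$ contributions of a controllable bus reduce, after cancellation of the $\tilde\omega_i\tilde P_i$ and $\tilde\mu_i\tilde P_i$ terms, to $-D_i\tilde\omega_i^2-\tilde P_i\tilde f_i$, while all the other pieces (the potential/network block controlled by A4, the C1 and C2 dissipation, the frequency damping at the remaining buses, and the nonpositive consensus terms) are unchanged and $\le0$. Invoking A5 to substitute $\rho_i=\beta_i(t)\tilde\omega_i$ with $|\beta_i(t)|\le\bar\beta_i$, the only thing left to sign is, for each $i\in\mathcal N_{CG}$, the quadratic form $-D_i\tilde\omega_i^2-\tilde P_i\tilde f_i+\beta_i(t)\tilde\mu_i\tilde\omega_i-\tau_i\tilde\mu_i^2-\tau_i\tilde\mu_i\tilde f_i$.

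Using the monotone-Lipschitz bound $\tilde P_i\tilde f_i\ge\tilde f_i^2/l_i$ from A2, this form is at most $-\xi_i^TM_i(t)\xi_i$ with $\xi_i=(\tilde\omega_i,\tilde f_i,\tilde\mu_i)^T$ and $M_i(t)$ symmetric, with diagonal $(D_i,1/l_i,\tau_i)$ and off-diagonals $(M_i)_{12}=0$, $(M_i)_{13}=-\beta_i(t)/2$, $(M_i)_{23}=\tau_i/2$. Its leading principal minors are $D_i$, $D_i/l_i$ and $(D_i\tau_i(4-\tau_i l_i)-\beta_i(t)^2)/(4l_i)$, which are uniformly positive precisely because $0<\tau_i<4/l_i$ and $\bar\beta_i<\sqrt{\tau_i D_i(4-\tau_i l_i)}$ — i.e.\ this is where \eqref{beta_range} comes from. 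Hence $M_i(t)\ge\lambda_i I$ for a constant $\lambda_i>0$, so $\dot W\le-\sum_{i\in\mathcal N_{CG}}\lambda_i(\tilde\omega_i^2+\tilde f_i^2+\tilde\mu_i^2)+(\text{nonpositive})\le0$ on $\mathcal S$, and $\dot W=0$ forces $\tilde\omega_i=\tilde\mu_i=\tilde f_i=0$ (hence $\tilde P_i=0$ by strong convexity) for $i\in\mathcal N_{CG}$ together with the same equalities as in Appendix~B; the largest invariant subset is $\tilde{\mathcal{V}}$. LaSalle then gives convergence to $\tilde{\mathcal{V}}$, and since $W$ recentered at any nearby equilibrium is positive definite there (by A4 and the form of $W_k$), each such equilibrium is Lyapunov stable, so the $\omega$-limit set, being a connected set of stable equilibria, is a single point.

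I expect the one genuinely new step to be checking positive definiteness of the $3\times3$ matrix $M_i(t)$: this computation produces the bound \eqref{beta_range} and is the heart of the proof, the remainder being an adaptation of the Theorem~\ref{thm:stability} argument. A secondary point requiring care is that the $\tau_i$-feedback in \eqref{load estimation2} couples $\gamma_i^\pm$ into the $\mu_i$-dynamics in a way that does not cancel against the $\dot P_i^g$ terms (unlike in Theorem~\ref{thm:stability}), which is precisely why the non-binding hypothesis on \eqref{eq:A3} — used above to eliminate $\gamma_i^\pm$ — is indispensable here whereas Theorem~\ref{thm:stability} does not need it.
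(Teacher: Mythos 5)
Your proposal is correct and follows essentially the same route as the paper's Appendix~D: the same Lyapunov function $W$, the same identification of the extra $\mu_i$-term coming from \eqref{load estimation2} with $\rho_i=\beta_i(t)\omega_i$ under A5, reduction to the negative (semi)definiteness of a $3\times 3$ quadratic form per controllable bus whose determinant condition is exactly \eqref{beta_range}, and then LaSalle plus the point-convergence argument of Theorem~\ref{thm:stability}, with the non-binding hypothesis used, as in the paper's footnote, to drop $\gamma_i^{\pm}$ locally. The only (minor) difference is technical: you handle the nonlinearity through the incremental co-coercivity bound $(P_i^g-P_i^{g*})(f_i'(P_i^g)-f_i'(P_i^{g*}))\ge (f_i'(P_i^g)-f_i'(P_i^{g*}))^2/l_i$ and check leading principal minors directly in terms of $l_i$, whereas the paper integrates the symmetric part of the Jacobian along the segment and applies a Schur complement with $c_i=f_i''$ before relaxing $c_i\le l_i$; both yield the identical condition.
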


The proof of Theorem \ref{thm:stability2} is given in Appendix.D. In fact, the range of $\beta(t)$ can be very large as long as $l_i$ is small enough. For example, set $\tau_i=3/l_i$, then $\bar\beta_i(t)<\sqrt{3D_i/l_i}$. As we know, if we change the objective function to $k\sum\nolimits_{i\in {\cal {N}}_{CG}} f_i(P_i^g), k>0$, the optimal solution will not change. Thus, $l_i$ can be very small as long as $k$ is small enough. In this regard, \eqref{beta_range} is not conservative.
Moreover, we will illustrate in Section \ref{sec:sr} that even \eqref{eq:A3} is binding, our controller still works.

}

\section{Simulation Results}
\label{sec:sr}

\subsection{Test System}
To test the proposed controller, the New England 39-bus system with 10 generators as shown in Fig.\ref{fig:system}, is utilized. In the simulation, we  apply  \eqref{load estimation} to estimate the virtual load demands. 
All simulations are implemented in the commercial power system simulation
software PSCAD \cite{website:PSCAD},
and are carried on a notebook with 8GB memory and 2.39 GHz CPU. 

We control only a subset of these generators, namely G32, G36, G38, G39, while the remaining  are equipped with the primary frequency control  given in \eqref{droopcontrol}.   
  In particular, we apply the controller \eqref{eq:controller} derived based on a simple model to a much more realistic and complicated model in PSCAD. The detailed electromagnetic transient model of three-phase synchronous machines (sixth-order model) is adopted to simulate generators with  governors and exciters. All the lines and transformers take both resistance and reactance into account.  The loads are modeled as  fixed loads in PSCAD. The communication graph is undirected and set as $G32\leftrightarrow{}{}G36\leftrightarrow{}{}G38\leftrightarrow{}{}G39\leftrightarrow{}{}G32$. 

The objective function is set as $f_i=\frac{1}{2}a_i(P_i^g)^2+b_iP_i^g$, which is  the generation cost of generator $i$. Capacity limits of $P_i^g$ and parameters $a_i, b_i$ are given in Table \ref{tab:DistConstraints}. 

The closed-loop system is shown in Fig.\ref{fig:closed_loop_system}, where each  generator only needs to measure local frequency, mechanical power, voltage and phase angle to compute its control command. Note that there is no  load measurement and \textcolor{black}{only  $\mu_i$ are communicated between neighboring controllable generators. }
\begin{figure}[t]
	\centering
	\includegraphics[width=0.48\textwidth]{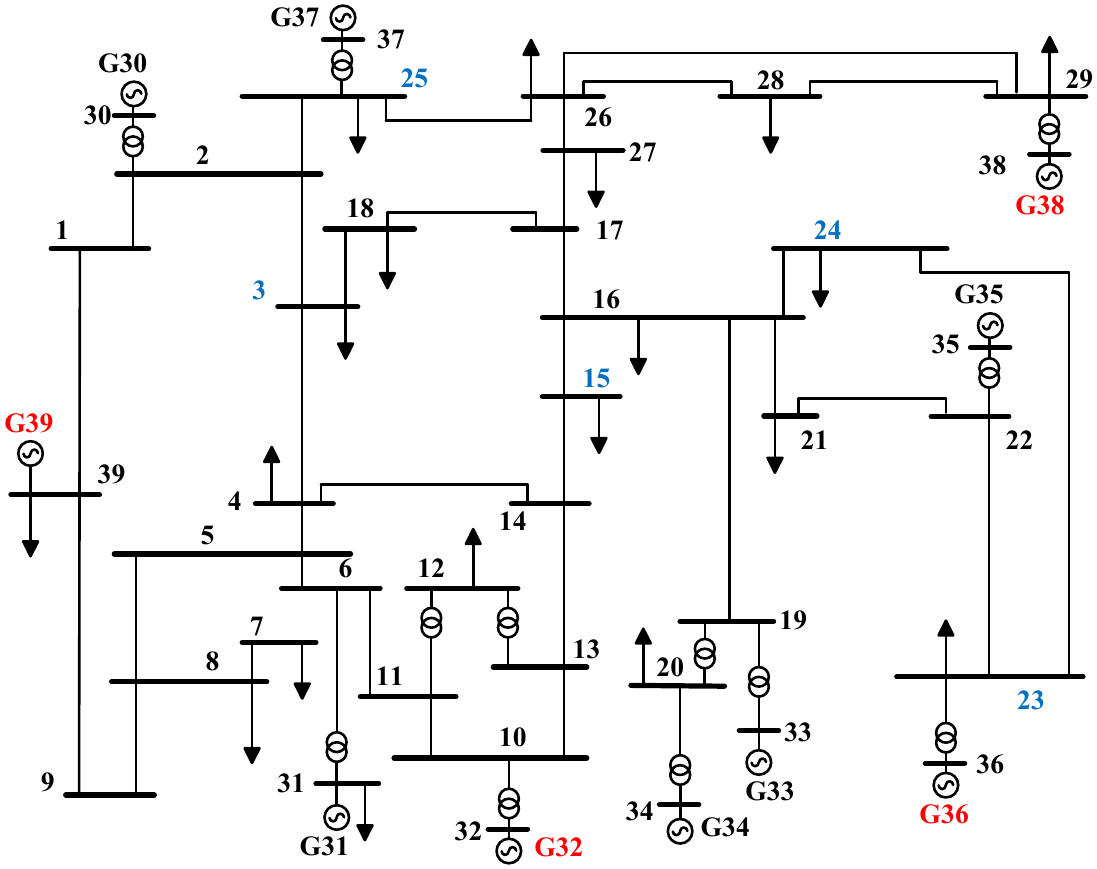}
	\caption{The New England 39-bus system}
	\label{fig:system}
\end{figure}
\begin{figure}[t]
	\centering
	\includegraphics[width=0.45\textwidth]{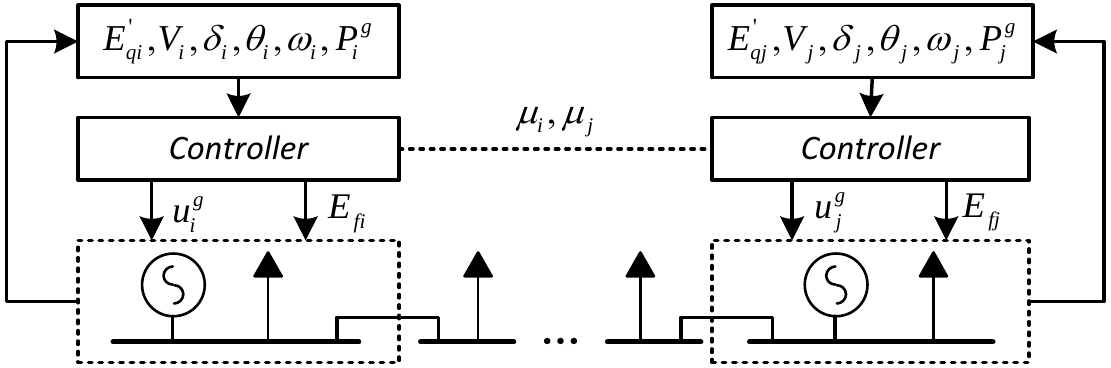}
	\caption{Diagram of the closed-loop system}
	\label{fig:closed_loop_system}
\end{figure}

\begin{table}[!t]
	\centering
	\footnotesize
	\caption{ {Capacity limits of generators }}
	\label{tab:DistConstraints}
	\begin{tabular}{c c c c}
		\hline 
		G$i$ & [$\underline{P}^g_i$, $\overline{P}^g_i$] (MW) & $a_i$ & $b_i$ \\
		\hline
		32   & [0, 1000]   & 0.00009 & 0.032   \\
		36   & [0, 1000]   & 0.00014 & 0.030   \\
		38   & [0, $\ $850]& 0.00010 & 0.032   \\
		39   & [0, 1080]   & 0.00008 & 0.032   \\
		\hline
	\end{tabular}
\end{table}

\subsection{Results under Small Disturbances}
We consider the following scenario: 1) at $t=10$s, there is a step change of $60$MW load demands at each of buses 3, 15, 23, 24, 25; 2) at $t=70$s, there is another step change of $120$MW load at bus $23$. Neither the original load demands nor their changes are known to the generators.

\subsubsection{Equilibrium}
In  steady states, the nominal frequency is well recovered. The optimal mechanical powers are given in Table \ref{tab:eper}, which are identical to the optimal solution of  \eqref{SFC} computed by centralized optimization. Stage 1 is for the period from $10$s to $70$s, and Stage 2 from $70$s to $130$s. {\color{black}The values in Table \ref{tab:eper} are generations at the end of each stage. In Stage 1, no generation reaches its limit, while in Stage 2 both G38 and G39 reach their upper limits. At the end of each stage, the marginal generation cost $-\mu_i$ of generator $i$, converges identically (see Fig. \ref{mu_z}), implying the optimality of the results. } The test results confirm the theoretical analyses and demonstrate that our controller can automatically attain optimal operation points even in the more realistic and sophisicated   model.

\begin{table}[!t]
	\centering
	\footnotesize
	\caption{{Equilibrium points}}
	\label{tab:eper}
	\begin{tabular}{c c c c c}
		\hline 
		& $P_{32}^g$ (MW) &  $P_{36}^g$ (MW)  &  $P_{38}^g$ (MW)  &  $P_{39}^g$ (MW) \\
		\hline
		Stage 1 & 927 & 610    & 834  & 1043\\
		Stage 2 & 968 & 652    & 850  & 1080 \\
		\hline
	\end{tabular}
\end{table}

\subsubsection{Dynamic Performance}
In this subsection, we analyze the dynamic performance  of the closed-loop system. For comparison, automatic generation control (AGC) is tested in the same scenario. {\color{black} 
In the AGC implementation, the signal of area control error (ACE) is given by $ACE=K_f\omega+P_{ij}$ \cite[Chapter 11.6]{Arthur:Power}, where,  $K_f$ is the frequency response coefficient; $\omega$ the frequency deviation;  $P_{ij}$ the deviation of tie line power. In the case studies, we can treat the whole system as one control area, implying $P_{ij}=0$. Hence, the control center computes $ACE=K_f\omega$ and allocates it to AGC generators, G32, G36, G38 and G39. In this situation, the control command of each generator is $–r_i\cdot ACE$, where $\sum_i r_i=1$. In this paper, we set  $r_i=0.25$ for $i=1,2,3,4$.}
 
The trajectories of frequencies are given in Fig.\ref{frequency}, where the left one stands for the  proposed controller and the right one  for the AGC.
\begin{figure}[!t]
	\centering
	\footnotesize
	\includegraphics[width=0.48\textwidth]{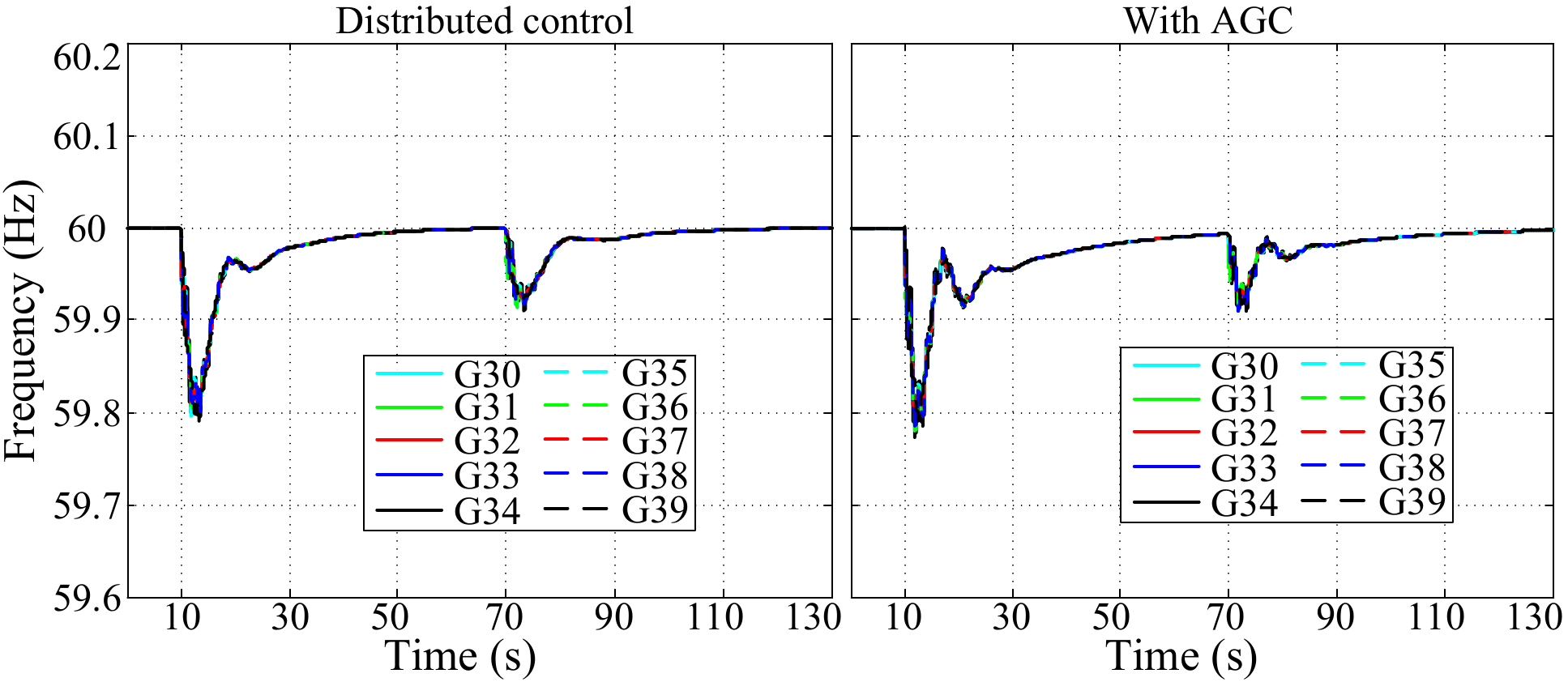}
	\caption{Dynamics of frequencies}
	\label{frequency}
\end{figure}
It is shown in Fig.\ref{frequency} that the frequencies are recovered to the nominal value  under both controls. The frequency drops under two controls are very similar while the recovery time under the proposed control is much less than that under the conventional  AGC. 

{\color{black}
	Although a number of  studies have been devoted to distributed frequency control of power systems, most of them assume that all the nodes are controllable except \cite{Pang:Optimal}. To make a fair comparison, the controller proposed in \cite{Pang:Optimal} is adopted as a rival in our tests. As shown in Eq.(8) of \cite{Pang:Optimal}, each controller needs to predict the load it should supply in steady state. However, it is hard to acquire an accurate prediction in practice, which could lead to steady-state frequency error. }

\begin{figure}[!t]
	\centering
	\footnotesize
	\includegraphics[width=0.38\textwidth]{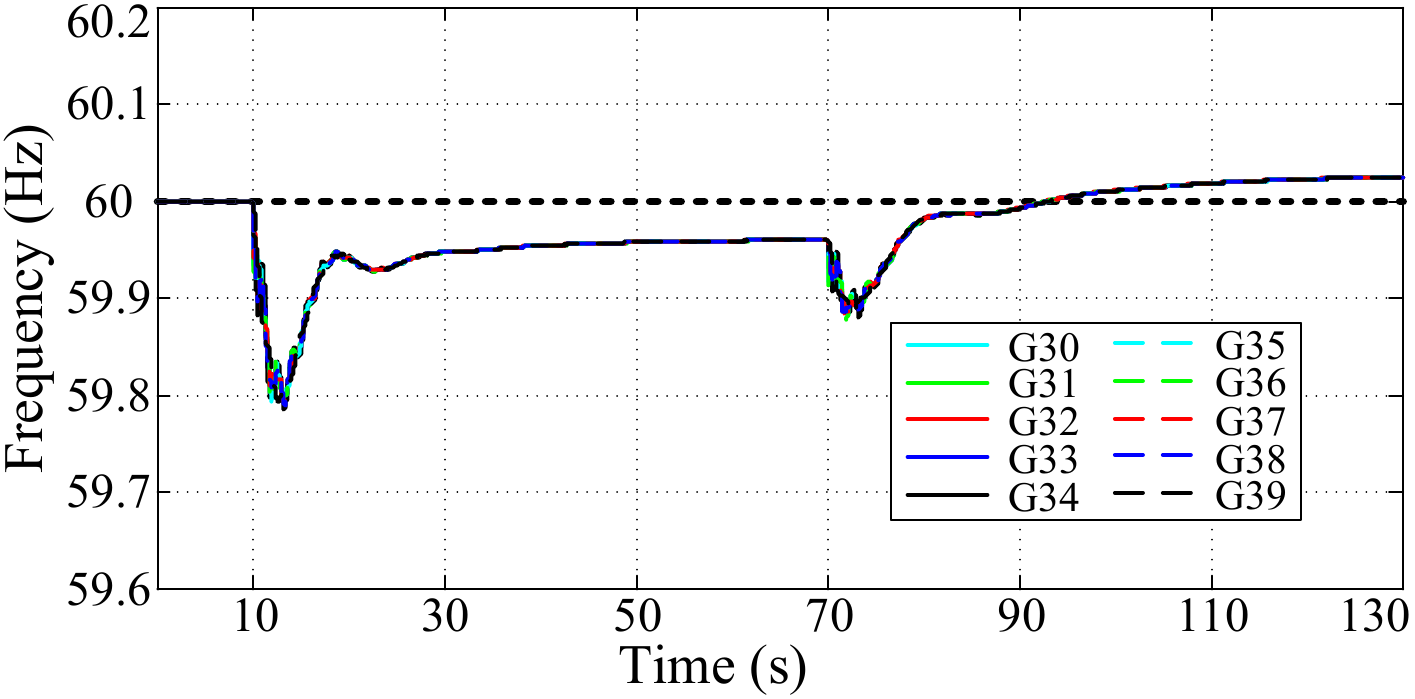}
	\caption{Dynamics of frequencies with the controller in \cite{Pang:Optimal}}
	\label{frequency2}
\end{figure}
\textcolor{black}{
	In the next case, we compare the two controls in the same scenario   as that in Section VII.B.	The dynamics of frequencies with the controller given in \cite{Pang:Optimal} are shown in Fig.\ref{frequency2}. It is observed that there is a frequency deviation in steady state, as the prediction is inaccurate. Although the deviation is usually quite small, it is difficult to completely eliminate. In contrast, when the proposed method is adopted, there is no frequency deviation in steady state, as is shown in Fig.\ref{frequency}. This result is perfectly in coincidence with the indication given by Theorems \ref{thm:optimality}, \ref{thm:stability} and \ref{thm:stability2} and Lemma \ref{Lemma:optimality} in this paper.
}

Mechanical power dynamics under the AGC and the proposed controller  are shown in Fig.\ref{generation_AGC} and Fig.\ref{generation}, respectively. The left parts show  mechanical powers of G32, G36, G38, G39, while the right parts show mechanical powers of other generators adopting conventional controller \eqref{droopcontrol}. With both controls, mechanical powers of the generators adopting \eqref{droopcontrol} remain identical in the steady state. However, there are two  problems when adopting the AGC: 1) mechanical powers are not optimal; 2) mechanical power of G39  violates the capacity limit. In contrast, the proposed control can avoid these problems. In  Stage 1 of Fig.\ref{generation}, no generator reaches capacity limits. In Stage 2, both G38 and G39 reach their upper limits. Then, G38 and G39 stop increasing their outputs while  G32 and G36 raise their outputs  to balance the load demands. In addition, the steady states of  in both stages are optimal, which are the same as shown in Table \ref{tab:eper}.

\begin{figure}[!t]
	\centering
	\includegraphics[width=0.49\textwidth]{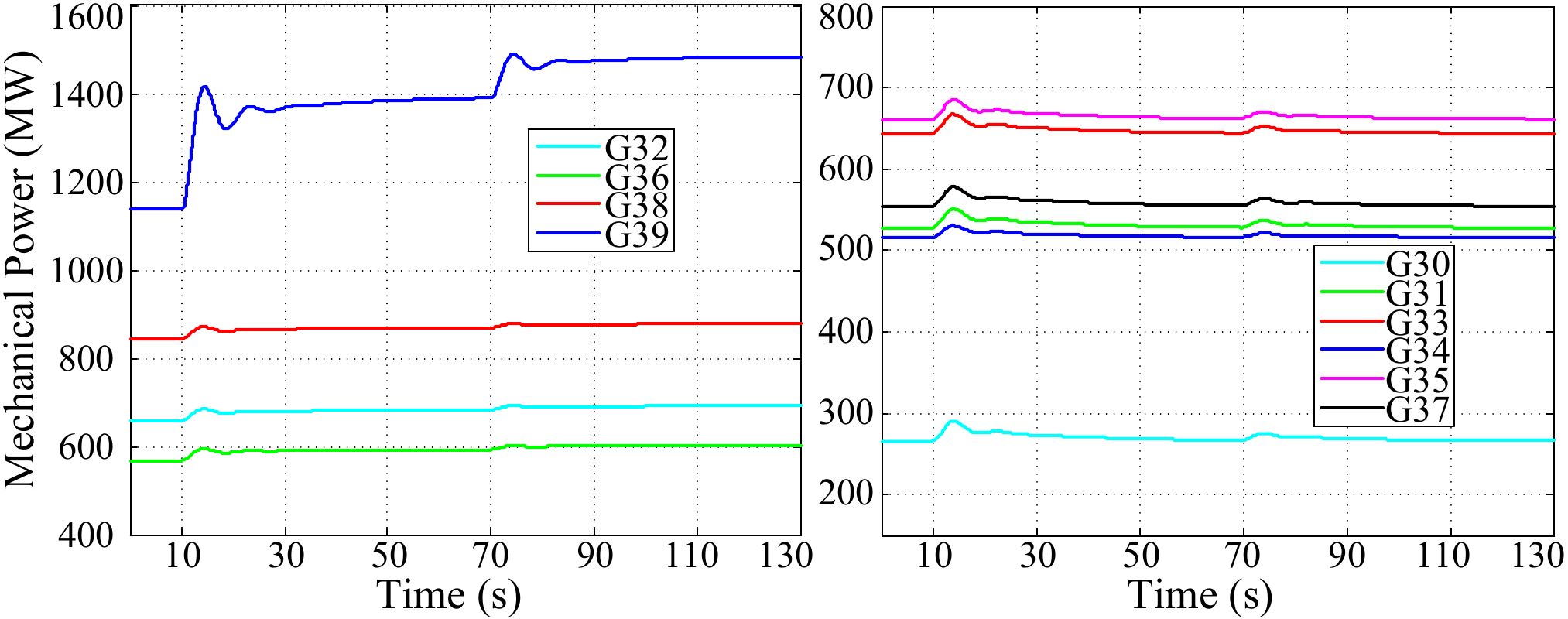}
	\caption{Dynamics of mechanical powers under AGC}
	\label{generation_AGC}
\end{figure}

\begin{figure}[!t]
	\centering
	\includegraphics[width=0.49\textwidth]{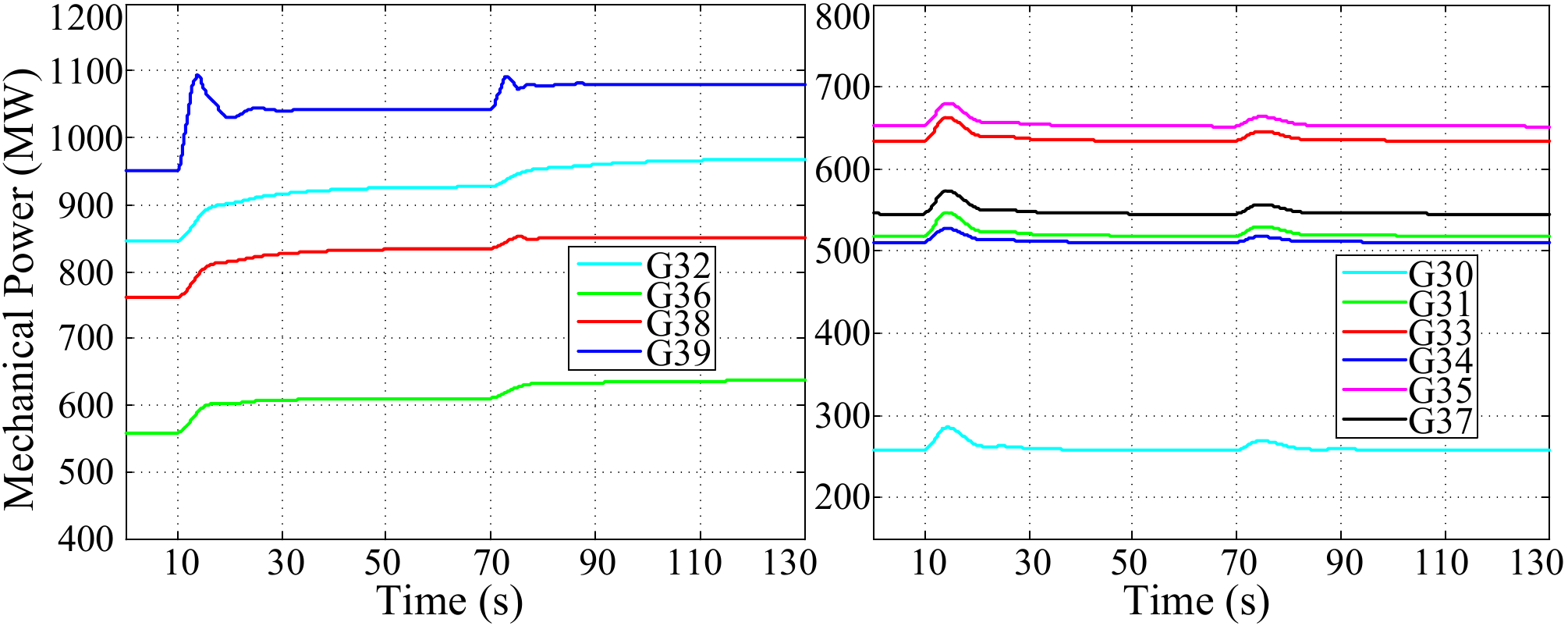}
	\caption{Dynamics of mechanical powers under the proposed control}
	\label{generation}
\end{figure}

\begin{figure}
	\centering
	\includegraphics[width=0.36\textwidth,height=1.3in]{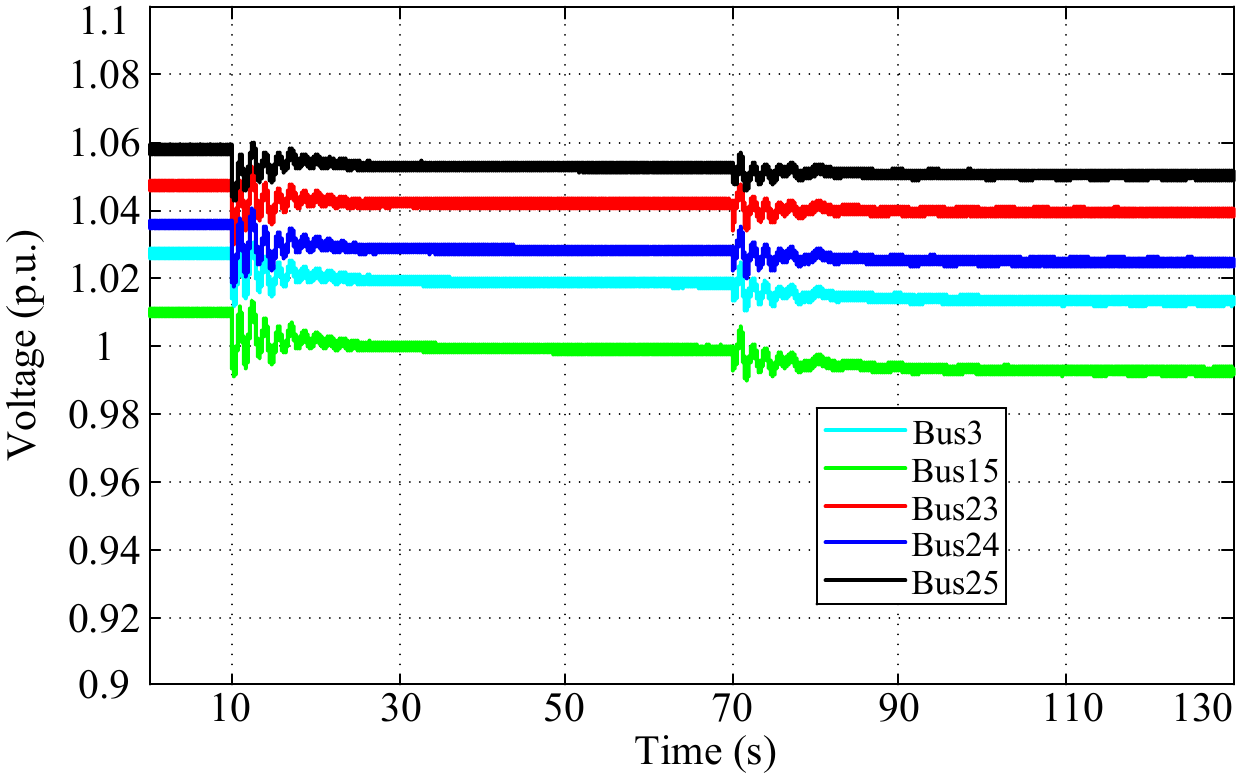}
	\caption{Dynamics of bus voltages}
	\label{networkvoltage}
\end{figure}

We also illustrate in Fig.\ref{networkvoltage} the dynamics of voltage at buses 3, 15, 23, 24, 25. The voltages converge rapidly, and only experience small drops when loads increase. This result validates the effectiveness of the voltage control.

The marginal generation cost of generator $i$, $-\mu_i$ ,  are shown in the left part of Fig.\ref{mu_z}. 
They converge in both stages and the steady-state values 
in  Stage 2 are slightly bigger than that in Stage 1,
as the load changes lead to an increase in the marginal generation cost. Dynamics of $z_{ij}, (i,j)\in E$ are illustrated in the right part of Fig.\ref{mu_z}, which demonstrate that the steady state values do not change in the two stages. In addition, \textcolor{black}{the} variation of $z$ in  transient is very small as the deviation of $\mu_i$ is very small.

%

\begin{figure}[!t]
	\centering
	\includegraphics[width=0.49\textwidth]{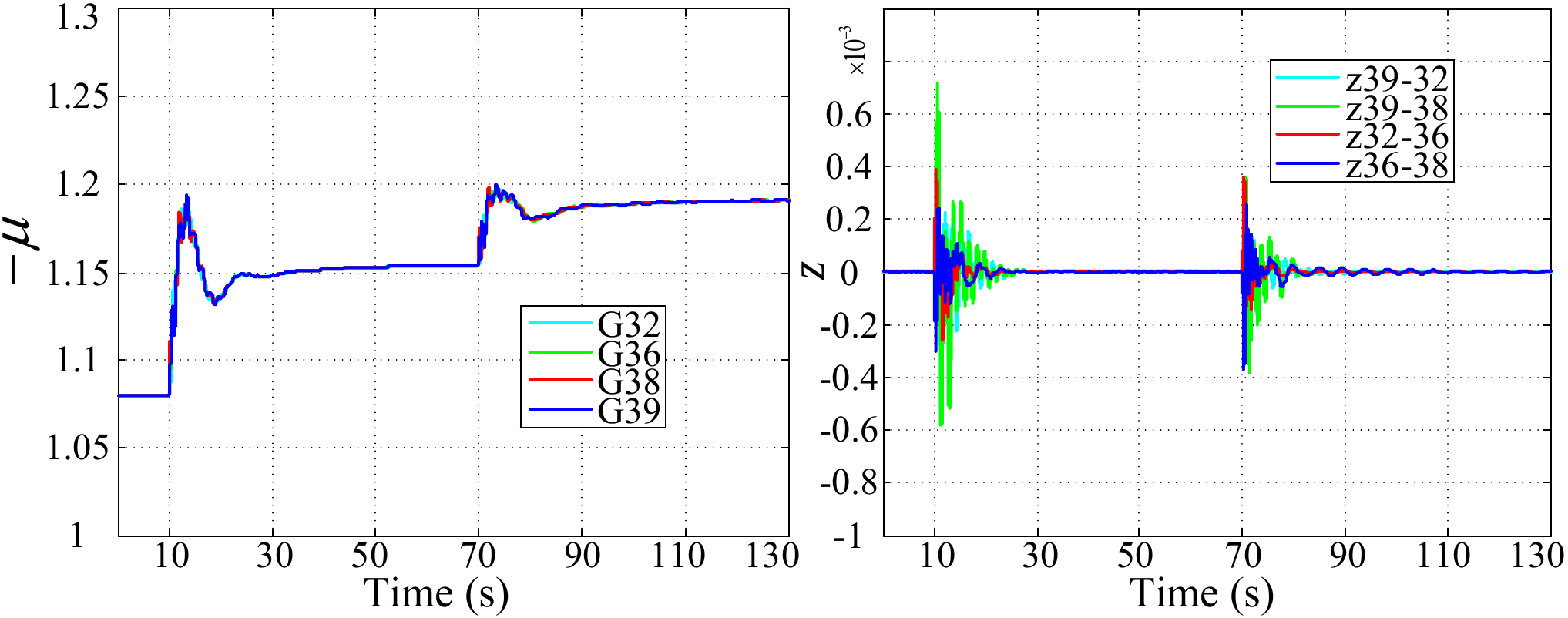}
	\caption{Dynamics of $-\mu$ and $z$}
	\label{mu_z}
\end{figure}

%
%
%

\subsection{Performance under Large Disturbances}
In this subsection, two scenarios of large disturbances are considered. One is  a generator tripping and the other is  a  short-circuit fault followed by a line tripping.
\subsubsection{Generator tripping}
{\color{black} 
At $t=10s$, G32 is  tripped, followed by occurrence of certain power imbalance. Note that the communication graph still remains connected. The output of G32 drops to zero. Frequency and mechanical powers change accordingly. System dynamics are illustrated in Fig.\ref{fre_generation_Gtrip}. The left part of Fig.\ref{fre_generation_Gtrip} shows  the frequency dynamics, and the right shows the mechanical power dynamics of controllable generators. It is observed that the system frequency experiences a big drop at first, and then recovers to the nominal value quickly as other controllable generators increase outputs to balance the power mismatch. These results confirm that our controller can adapt to generator tripping autonomously even if the tripped generator is contributing to  frequency control.	
}
\begin{figure}[t]
	\centering
	\includegraphics[width=0.49\textwidth]{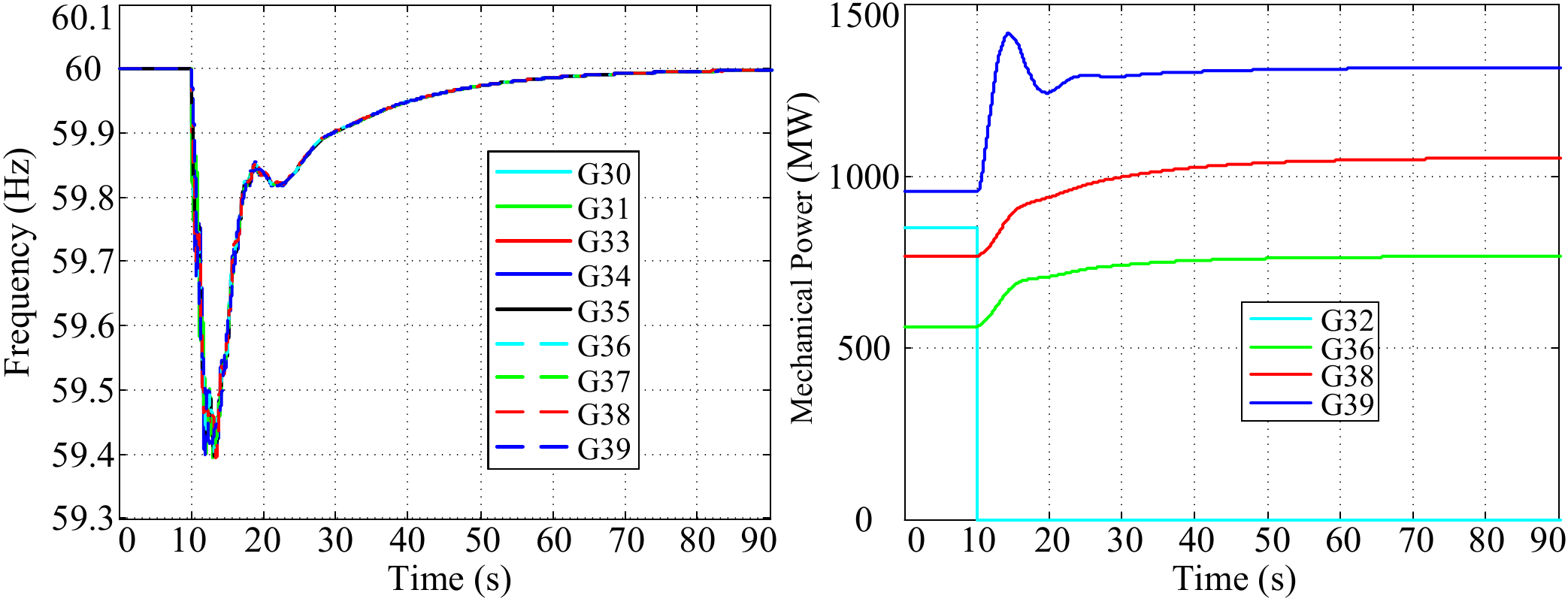}
	\caption{Dynamics of frequencies and mechanical powers with generator trip}
	\label{fre_generation_Gtrip}
\end{figure}

\subsubsection{Short-circuit fault}
At  $t=10s$, there occurs a three-phase short-circuit fault on line $(4,14)$. At $t=10.05$s, this line is tripped by breakers. At $t=60$s, the fault is removed and line $(4,14)$ is re-closed. Frequency dynamics and voltage dynamics of buses 4 and 14  are given in Fig.\ref{fre_voltage_Ltrip}, where the left part shows frequency dynamics and the right shows voltage dynamics. It can be seen that the frequency experiences violent oscillations after the fault happens. And then it is stabilized quickly once the line is tripped. Small frequency oscillation occurs when the line is  re-closed. {\color{black} At the same time, the voltages of buses 4 and 14  drop to nearly zero when the fault happens. The voltages are then stabilized to a new steady-state value in around $10$s after the fault line is tripped. They are slightly different from their initial values because the system’s operating point has changed due to line tripping. When the tripped line is re-closed, the voltages recover to the initial values quickly. } 
\begin{figure}[t]
	\centering
	\includegraphics[width=0.49\textwidth]{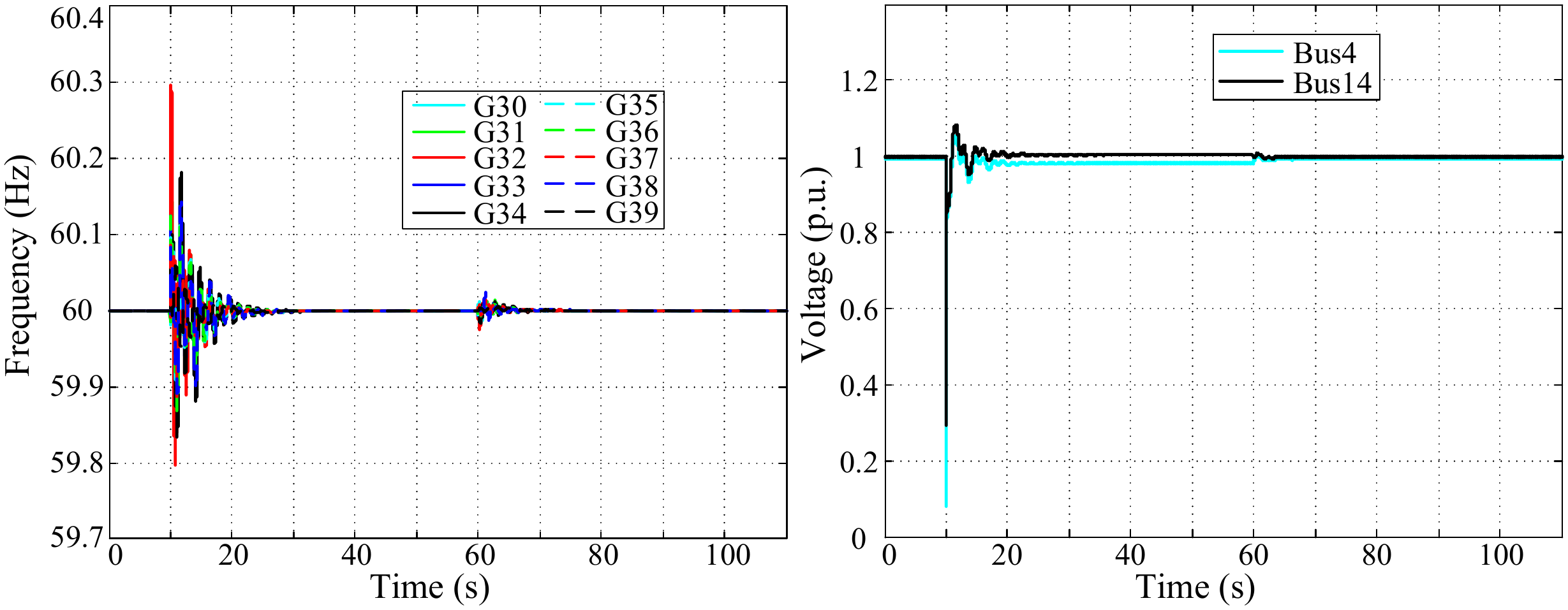}
	\caption{Dynamics of frequencies and voltages with line trip}
	\label{fre_voltage_Ltrip}
\end{figure}

\begin{figure}[t]
	\centering
	\includegraphics[width=0.49\textwidth]{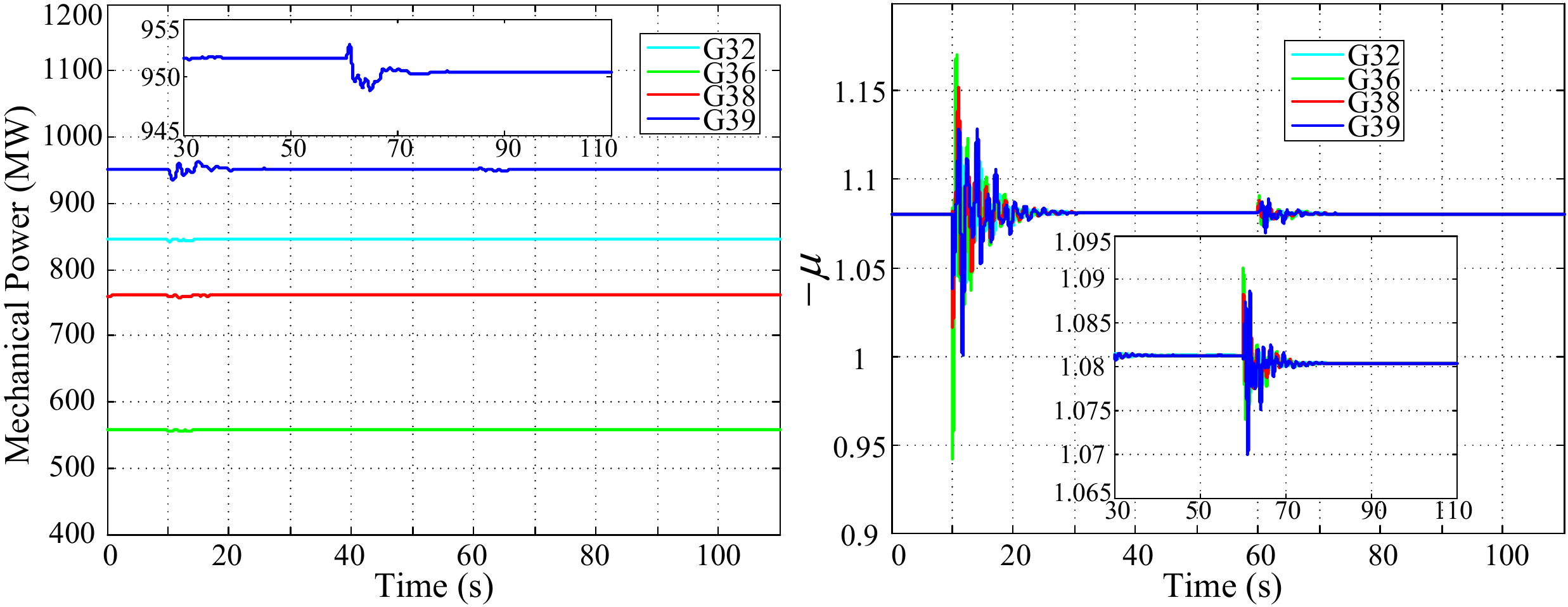}
	\caption{Dynamics of $-\mu$ and mechanical powers with line trip}
	\label{mu_generation_Ltrip}
\end{figure}

When line $(4,14)$ is tripped, the power flow across the power network varies accordingly. As a consequence, the line loss also changes, causing variations of mechanical powers, as shown in Fig.\ref{mu_generation_Ltrip}. In the left part of Fig.\ref{mu_generation_Ltrip}, the inset is the dynamics of mechanical power of G39 from $30$s to $110$s. Similarly, the inset in the right part is dynamics of $-\mu$ of all generators. 
Mechanical powers and their marginal costs all increase when the line is tripped. However, the proposed controller compensates  the loss change autonomously.

These simulation results demonstrate that the proposed distributed optimal frequency controller can cope with large disturbances such as generator tripping and short-circuit fault.

\section{Conclusion}
In this paper, we have designed a distributed optimal frequency control using a nonlinear network-preserving model, where only a subset of generator buses is controlled. We have also simplified the implementation by relaxing the requirements of load measurements and communication topology. Since nonlinearity of  power flow model and dynamics of  excitation voltage has been  taken into account, our controllers can cope with large disturbances. We have proved that the closed-loop system asymptotically converges to the optimal solution of the economic dispatch problem. We have also carried out substantial simulations to verify the good performance of our controller under both small and large disturbances. 

In this work, we have not considered the constraints on line flows, since the controllable generators are selected arbitrarily and may not suffice for congestion management. An interesting problem is how to find out the minimal set of controllable generators to fulfill the requirement of congestion management, which is our future work.



%
%

\bibliographystyle{Bibliography/IEEEtran}
\bibliography{mybib}

\appendix

\makeatletter
\@addtoreset{equation}{section}
\@addtoreset{theorem}{section}
\makeatother
\renewcommand{\theequation}{A.\arabic{equation}}
\renewcommand{\thetheorem}{A.\arabic{theorem}}
\setcounter{equation}{0}

\subsection{Proof of Theorem \ref{thm:optimality}}
\begin{proof}
	From $\dot \gamma _i^-=\dot \gamma _i^+=0$ in \eqref{controller_d}, it follows that $\underline{P}_i^g \le {P}_i^{g*} \le \overline{P}_i^g $, which is the first assertion.
	From $\dot z_{ij}=0$ in \eqref{controller_c}, we get $\mu_i^*=\mu_j^*=\mu_0$. Set $\dot{\mu}_i=0$, add \eqref{controller_b} for $i\in {\cal N}_{CG}$, and recall $\sum\nolimits_{i\in {\cal {N}}_{CG}}\hat p_i = \sum\nolimits_{i\in \cal {N}} p_i- \sum\nolimits_{i\in {\cal {N}}_{UG}} P_i^{g*}$, we have 
	\begin{align}
		\label{eq:equilibrium:pg1}
		\sum\nolimits_{i\in {\cal {N}}_{CG}} P_i^{g*} - \sum\nolimits_{i\in \cal {N}} p_i + \sum\nolimits_{i\in {\cal {N}}_{UG}} P_i^{g*} = 0
	\end{align}
	
	The right sides of (\ref{eq:closedloop.1a1}) and (\ref{eq:closedloop.1a2}) vanish in the equilibrium points, which implies $\omega_i^*=\tilde\omega_i^*=\tilde\omega_j^*=\omega_0$. Set $\dot\omega_i = 0$ and add \eqref{eq:SGmodel.1b}, \eqref{load activepower} and \eqref{load activepower2}. We have 
	\begin{align}
		\label{eq:equilibrium:omega1}
		\omega_0\sum\limits_{i\in {\cal {N}}} \tilde D_i = \sum\limits_{i\in {\cal {N}}_{CG}} P_i^{g*} - \sum\limits_{i\in \cal {N}} p_i + \sum\limits_{i\in {\cal {N}}_{UG}} P_i^{g*} = 0
	\end{align}
	This implies that $\omega_0=0$ due to $\tilde D_i>0$, which is the second assertion. 
	
	Combine \eqref{eq:controlmodel}, \eqref{eq:SGmodel.1d} and $\dot P_i^g=0$. We have 
	$$f_i^{'}(P_i^{g*}) - \gamma _i^{-*} + \gamma _i^{+*} +\omega_i^*+\mu_i^*=0$$
	Since $\omega_0=0$, $\mu_i^*=\mu_j^*=\mu_0$, we can obtain the third assertion.
	
	Next we will prove assertion 4). Since all the constraints of SFC are linear, A3 implies that Slater's condition holds \cite[Chapter 5.2.3]{Boyd:convex}. Moreover, the objective function is strictly convex. We only need to prove that $(P_i^{g*}, \mu_0, \gamma _i^{-*}, \gamma _i^{+*})$ satisfies the KKT condition of SFC in order to prove the fourth assertion. 
		
	The KKT conditions of SFC problem (\ref{SFC}) are 
	\begin{subequations}
		\label{KKT}
		\begin{align}
			\label{KKT1}
			& f_i^{'}(P_i^{g*}) - \gamma _i^{-*} + \gamma _i^{+*} + \mu_0=0 \\
			\label{KKT2}
			& \sum\nolimits_{i\in {\cal {N}}_{CG}} P_i^{g*} - \sum\nolimits_{i\in \cal {N}} p_i + \sum\nolimits_{i\in {\cal {N}}_{UG}} P_i^{g*} = 0 \\ 
			\label{KKT3}
			& \underline{P}_i^g \le {P}_i^{g*} \le \overline{P}_i^g \\
			\label{KKT4}
			& \gamma _i^{-*} \ge 0, \gamma _i^{+*} \ge 0 \\
			\label{KKT5}
			& \gamma _i^{-*}(\underline{P}_i^g - {P}_i^{g*} )=0,  \gamma _i^{+*} ( {P}_i^{g*} - \overline{P}_i^g) =0 
		\end{align}
	\end{subequations}	
	From $\dot \gamma_i^- = \dot \gamma_i^+ = 0$, we have (\ref{KKT3}), (\ref{KKT4}) and (\ref{KKT5}). From the third assertion, we have (\ref{KKT1}). From $\dot{\omega}=0$ and the second assertion, we have (\ref{KKT2}).	Therefore, the equilibrium points of the closed-loop system (\ref{eq:closedloop}) satisfy the KKT conditions (\ref{KKT}). 
	This implies the fourth assertion.
	
	If A3 is strictly satisfied, we know $\exists\ i\in {\cal N}_{CG}$ that $\gamma _i^{-*} = \gamma _i^{+*} =0$. Then, $ \mu_i^*=-f_i^{'}(P_i^{g*}) $ is uniquely determined by $P_i^{g*}$, implying the last assertion. This completes the proof.
\end{proof}

\makeatletter
\@addtoreset{equation}{section}
\@addtoreset{theorem}{section}
\makeatother
\renewcommand{\theequation}{B.\arabic{equation}}
\renewcommand{\thetheorem}{B.\arabic{theorem}}

\subsection{Proof of Theorem \ref{thm:stability}}
\setcounter{equation}{0}
\begin{proof}[Proof of Theorem \ref{thm:stability}]
	Recall \eqref{Lagrangian L}, and dynamics of (\ref{eq:controlmodel}) - (\ref{controller_e}) are rewritten as 
	\begin{subequations}
		\begin{align}
		\label{eq:closedloop.1e2}
		\dot P^g_i & = -{k_{P_i^g}}\cdot\left(\omega_i  + {\partial \hat L(x_1, x_2)}/{\partial P_i^g}\right) \\
		\label{eq:closedloop.1f2}
		\dot \mu_i & = k_{\mu_i}\cdot {\partial \hat L(x_1, x_2)}/{\partial \mu_i}	\\
		\label{eq:closedloop.1g2}
		\dot z_{ij} &= k_{z_i}\cdot {\partial \hat L(x_1, x_2)}/{\partial z_{ij}} \\	
		\label{eq:closedloop.1h2}
		\dot \gamma _i^- & = k_{\gamma_i}\cdot \left[{\partial \hat L(x_1, x_2)}/{\partial \gamma _i^-} \right]_{\gamma _i^-}^+ \\
		\label{eq:closedloop.1i2}
		\dot \gamma _i^+ & = k_{\gamma_i} \cdot\left[{\partial \hat L(x_1, x_2)}/{\partial \gamma _i^+} \right]_{\gamma _i^+}^+ 
		\end{align}
	\end{subequations}
	
	With regard to the closed-loop system, we first define two sets, $\sigma^+$ and $\sigma^-$, as follows \cite{Feijer:Stability}.
	\begin{subequations}
		\begin{align}
		\sigma^+ &:= \{i\in {\cal N}_{CG} \, | \, \gamma^+_{i}=0,         \, P_i^g-\overline P_i^g<0
		\}\\
		\sigma^- &:= \{i\in {\cal N}_{CG} \, | \, \gamma^-_{i}=0,         \, \underline P_i^g-P_i^g<0
		\}
		\end{align}
	\end{subequations}
	Then \eqref{controller_d} and \eqref{controller_e} are equivalent to
	\begin{subequations}
		\label{eq:eta}
		\begin{align}
		\dot\gamma^+_{i} &= \left\{
		\begin{array}{ll}
		k_{\gamma_i}(P_i^g-\overline P_i^g),
		& \text{if}\ i \notin \sigma^+ ;\\
		0,
		& \text{if}\ i \in \sigma^+ .        \end{array}
		\right.\\
		\dot\gamma^-_{i} &= \left\{
		\begin{array}{ll}
		k_{\gamma_i}(\underline P_i^g-P_i^g),
		& \text{if}\ i \notin \sigma^- ;\\
		0,
		& \text{if}\ i \in \sigma^- .        \end{array}
		\right.
		\end{align}
	\end{subequations}
	
	The derivative of $W_k$ is 
	\begin{align}
	\label{derivative of Vk}
	&\dot W_k = \sum\limits_{i \in {\cal N}_{G}}M_i(\omega_i-\omega_i^*)\dot\omega_i + (x-x^*)^T\cdot K^{-1} \dot x \nonumber\\	
	&\le \sum\limits_{i \in {\cal N}_{G}}(\omega_i-\omega_i^*) (P^g_i -D_i \omega_i- P_{ei}) - \sum\limits_{i \in {\cal N}_{CG}}(P_i^g-P_i^{g*})\omega_i \nonumber\\
	&\ \ - (x_1-x^*_1)^T\cdot \nabla_{x_1}\hat L + (x_2-x^*_2)^T\cdot \nabla_{x_2}\hat L  \nonumber\\
	&= \sum\limits_{i \in {\cal N}_{G}}(\omega_i-\omega_i^*) (P^g_i - P^{g*}_i -D_i (\omega_i - \omega_i^*)- (P_{ei}-P_{ei}^*)) \nonumber\\
	&\ \ - \sum\limits_{i \in {\cal N}_{CG}}(P_i^g-P_i^{g*})\cdot(\omega_i- \omega_i^* )  \nonumber\\
	&\ \ - (x_1-x^*_1)^T\cdot \nabla_{x_1}\hat L  + (x_2-x^*_2)^T\cdot \nabla_{x_2}\hat L \nonumber\\
	&= - (x_1-x^*_1)^T\cdot \nabla_{x_1}\hat L  + (x_2-x^*_2)^T\cdot \nabla_{x_2}\hat L \nonumber\\
	&\ \ -\sum\limits_{i \in {\cal N}_{G}}D_i(\omega_i-\omega_i^*)^2 - \sum\limits_{i \in {\cal N}_{G}}(\omega_i-\omega_i^*) (P_{ei}-P_{ei}^*)  \nonumber\\
	&\ \ + \sum\limits_{i \in {\cal N}_{UG}}(P_i^g-P_i^{g*})(\omega_i-\omega_i^*) 
	\end{align}
	where the inequality is due to 
	\begin{align}
	(\gamma^--\gamma^{-*})^T[\underline P^g-P^g]^+_{\gamma^-}&\le(\gamma^--\gamma^{-*})^T(\underline P^g-P^g)\nonumber\\
	&=(\gamma^--\gamma^{-*})^T\nabla_{\gamma^-} \hat L\nonumber
	\end{align}
	Here the inequality holds since $\gamma^-_i=0\le \gamma_i^{-*}$ and $\underline P^g_i-P^g_i < 0$ for $i\in \sigma^-$, i.e., $(\gamma^-_i  -  \gamma_i^{-*}) \cdot (\underline P^g_i-P^g_i)\ge0$. Similarly for $i\in \sigma^+$, we have
	\begin{align}
	(\gamma^+-\gamma^{+*})^T[ P^g-\overline P^g]^+_{\gamma^+}&\le(\gamma^+-\gamma^{+*})^T(P^g-\overline P^g)\nonumber\\
	&=(\gamma^+-\gamma^{+*})^T\nabla_{\gamma^+} \hat L\nonumber
	\end{align}

	From \eqref{load activepower} and \eqref{load activepower2}, it yields 
	\begin{align}
	\label{power balance}
	0&=\sum\limits_{i \in {\cal N}_{G}} (\tilde \omega_i- \tilde\omega_i^*) \bigg((P_{ei}-P_{ei}^*)-\sum\limits_{j \in {N}_{i}} (P_{ij}-P_{ij}^*) \bigg) \\
	&\ \ - \sum\limits_{i \in {\cal N}}\tilde D_i(\tilde\omega_i -\tilde\omega_i^*)^2 - \sum\limits_{i \in {\cal N}_{L}}(\tilde\omega_i -\tilde\omega_i^*)\sum\limits_{j \in {N}_{i}} (P_{ij}-P_{ij}^*) \nonumber
	\end{align}
	Add (\ref{power balance}) to (\ref{derivative of Vk}), then we have
	\begin{align}
	\label{derivative of Vk2}
	&\dot W_k\le -\sum\limits_{i \in {\cal N}_{G}}D_i(\omega_i-\omega_i^*)^2 - \sum\limits_{i \in {\cal N}}\tilde D_i(\tilde\omega_i -\tilde\omega_i^*)^2  \nonumber\\
	& + \sum\limits_{i \in {\cal N}_{G}}(\tilde\omega_i-\omega_i) (P_{ei}-P_{ei}^*) - \sum\limits_{(i,j) \in {\cal E}}(\tilde\omega_i -\tilde\omega_j) (P_{ij}-P_{ij}^*) \nonumber \\
	& + \sum\limits_{i \in {\cal N}_{UG}}(P_i^g-P_i^{g*})(\omega_i-\omega_i^*)   \nonumber\\
	& - (x_1-x^*_1)^T\cdot \nabla_{x_1}\hat L  + (x_2-x^*_2)^T\cdot \nabla_{x_2}\hat L 
	\end{align}
	Since $\hat L$ is a convex function of $x_1$ and a concave function of $x_2$, we have 
	\begin{align}
	\label{saddle point}
	- & (x_1-x^*_1)^T\cdot \nabla_{x_1}\hat L (x_1,x_2) + (x_2-x^*_2)^T\cdot \nabla_{x_2}\hat L (x_1,x_2) \nonumber \\
	\le & \hat L(x_1^*,x_2) - \hat L(x_1,x_2) +\hat L(x_1,x_2) - \hat L(x_1,x^*_2) \nonumber \\
	= &  \hat L(x_1^*,x_2) - \hat L(x_1^*,x_2^*) +\hat L(x_1^*,x_2^*) - \hat L(x_1,x^*_2) \nonumber \\
	\le & \ 0	
	\end{align}
	where the first inequality follows because $\hat L$ is convex in $x_1$ and concave in $x_2$ and the second inequality follows because $(x^*_1, x^*_2)$ is a saddle point. Hence, we have
	\begin{align}
	\label{derivative of Vk3}
	&\dot W_k\le -\sum\limits_{i \in {\cal N}_{G}}D_i(\omega_i-\omega_i^*)^2 - \sum\limits_{i \in {\cal N}}\tilde D_i(\tilde\omega_i -\tilde\omega_i^*)^2  \nonumber\\
	& - \sum\limits_{i \in {\cal N}_{G}}(\omega_i-\tilde\omega_i) (P_{ei}-P_{ei}^*) - \sum\limits_{(i,j) \in {\cal E}}(\tilde\omega_i -\tilde\omega_j) (P_{ij}-P_{ij}^*) \nonumber \\
	& + \sum\nolimits_{i \in {\cal N}_{UG}}(P_i^g-P_i^{g*})(\omega_i-\omega_i^*) 
	\end{align}
	
	The partial of $W_p(x_p)$ is 
	\begin{subequations}
		\begin{align}
		\nabla_{E_{qi}^{'}} W_p&=\frac{1}{x_{di}-x_{di}^{'}}(E_{qi}-E^*_{qi})     \\
		\nabla_{V_{i}} W_p&=0   \\
		\nabla_{\delta_{i}} W_p&=P_{ei}-P^*_{ei} \\ 		
		\nabla_{\theta_{i}} W_p&=\sum\limits_{(i,j) \in {\cal E}} (P_{ij}-P_{ij}^*) - \sum\limits_{i \in {\cal N}_{G}}(P_{ei}-P^*_{ei}) 
		\end{align}
	\end{subequations}
	The derivative of $W_p$ is 
	\begin{align}
	\dot W_p & =  \sum\limits_{i \in {\cal N}_{G}}\frac{(E_{qi}-E^*_{qi})(E_{fi}-E^*_{fi})}{T_{d0i}^{'}(x_{di}-x_{di}^{'})} - \sum\limits_{i \in {\cal N}_{G}}\frac{(E_{qi}-E^*_{qi})^2}{T_{d0i}^{'}(x_{di}-x_{di}^{'})} \nonumber\\
	&+\sum\limits_{i \in {\cal N}_{G}}(\omega_i-\tilde\omega_i) (P_{ei}-P_{ei}^*) + \sum\limits_{(i,j) \in {\cal E}}(\tilde\omega_i -\tilde\omega_j) (P_{ij}-P_{ij}^*)
	\end{align}
	
	The Lyapunov function is defined as $W=W_k+W_p+\sum\limits_{i\in {\cal {N}}_{UG}} S_{\omega_i} + \sum\limits_{i\in {\cal {N}}_{G}} \frac{1}{{T_{d0i}^{'}(x_{di}-x_{di}^{'})}}S_{E_i}$, and its derivative is
	\begin{align}
		\label{dotW}
		\dot W & = \dot W_k+\dot W_p +\sum\nolimits_{i\in {\cal {N}}_{UG}} \dot S_{\omega_i} + \sum\nolimits_{i\in {\cal {N}}_{G}}\frac{\dot S_{E_i}}{{T_{d0i}^{'}(x_{di}-x_{di}^{'})}} \nonumber\\
		&\le-\sum\limits_{i \in {\cal N}_{G}}D_i(\omega_i-\omega_i^*)^2 - \sum\limits_{i \in {\cal N}}\tilde D_i(\tilde\omega_i -\tilde\omega_i^*)^2  \nonumber\\
		& - \sum\limits_{i \in {\cal N}_{G}}\frac{(E_{qi}-E^*_{qi})^2}{T_{d0i}^{'}(x_{di}-x_{di}^{'})}\nonumber\\
		& + \sum\limits_{i \in {\cal N}_{UG}}(P_i^g-P_i^{g*})(\omega_i-\omega_i^*) +\sum\limits_{i\in {\cal {N}}_{UG}} \dot S_{\omega_i} \nonumber\\
		& + \sum\limits_{i \in {\cal N}_{G}}\frac{(E_{qi}-E^*_{qi})(E_{fi}-E^*_{fi})}{T_{d0i}^{'}(x_{di}-x_{di}^{'})} + \sum\limits_{i\in {\cal {N}}_{G}}\frac{\dot S_{E_i}}{{T_{d0i}^{'}(x_{di}-x_{di}^{'})}} \nonumber\\
		&\le-\sum\limits_{i \in {\cal N}_{G}}D_i(\omega_i-\omega_i^*)^2 - \sum\limits_{i \in {\cal N}}\tilde D_i(\tilde\omega_i -\tilde\omega_i^*)^2  \nonumber\\
		& - \sum\limits_{i \in {\cal N}_{G}}\frac{(E_{qi}-E^*_{qi})^2}{T_{d0i}^{'}(x_{di}-x_{di}^{'})} - \sum\limits_{i \in {\cal N}_{UG}} \phi_{\omega_i} - \sum\limits_{i \in {\cal N}_{G}} \phi_{E_i}\nonumber\\
		& \le 0
	\end{align}
	The last two inequalities are due to assumption A4. 
%
	
	To prove the locally asymptotic stability of the closed-loop system, we also need to prove that $W>0, \forall v\in S \backslash v^*$. Equivalently, we need $\nabla^2_v W>0, \forall\ v\in S \backslash v^*$, i.e. A4.
	
	Consequently, there exists a neighborhood set $\{\ v:W(v)\le \epsilon\ \}$ for all sufficiently small $\epsilon>0$ so that $\nabla^2_v W(v)>0$. Hence, there is a compact set $\cal S$ around $v^*$ contained in such neighborhood, which is forward invariant. Let $Z_1:= \{\ v :\dot W(v)=0\ \} $. By LaSalle's invariance principle, the each of trajectories $v(t)$ starting from $\cal S$ converges to the largest invariant set $Z^+$ contained in ${\cal S} \cap Z_1$. From above analysis, if $\dot W(v)=0$, $v$ is an equilibrium point of the closed-loop system \eqref{eq:closedloop}. Hence, $v$ converges to $Z^+ \in \cal V$. 
	
	Finally, we will prove the convergence of each $v(t)$ starting from $\cal V$ is to a point by following the proof of Theorem 1 in \cite{Stegink:aunifying}. Since $v(t)$ is bounded, its $\omega$-limit set $\Omega(v)\neq \emptyset$. By contradiction, suppose there exist two different points in $\Omega(v)$, i.e., $v_1^*, v_2^* \in \Omega(v), v_1^*\neq v_2^*$. Since the Hessian of $W_1(v), W_2(v)$ is positive definite in $\cal S$, there exist $W_1(v), W_2(v)$ defined by \eqref{Lyapunov} with respect to $v_1^*, v_2^*$ and scalars $c_1>0, c_2>0$ such that two sets $W_1^{-1}(\le c_1):=\{v|W_1(v)\le c_1 \}, W_2^{-1}(\le c_2):=\{v|W_2(v)\le c_2 \}$ are disjoint (i.e. $W_1^{-1}(\le c_1)\cap W_2^{-1}(\le c_2)=\emptyset $) and compact. In addition, $W_1^{-1}(\le c_1), W_2^{-1}(\le c_2)$ are forward invariant. By \eqref{dotW}, there exists a finite time $t_1>0$ such that $v(t)\in W_1^{-1}(\le c_1)$ for $\forall t\ge t_1$. Similarly, there exists a finite time $t_2>0$ such that $v(t)\in W_2^{-1}(\le c_2)$ for $\forall t\ge t_2$. This implies that $v(t)\in W_1^{-1}(\le c_1)\cap W_2^{-1}(\le c_2)$ for $\forall t\ge \max (t_1, t_2)$, which is a contradiction.  This completes the proof.
%
%
\end{proof}
{\color{black}
	In this part, we discuss the reasonableness of Assumption A4 by referring to  \cite[Lemma 3]{Persis:A}. Reference \cite{Persis:A} investigates the control of inverter-based  microgrids based on a network-preserving model, while we extend some results to more complicated synchronous-generator based bulk power systems. 
	For simplicity, we first present some notations following \cite{Persis:A}.  Comparing \eqref{line power} and \eqref{eq:Model.PQ}, $P_{ei},Q_{ei}$ have same structures with $P_{ij}, Q_{ij}$, respectively. We can treat the reactance of generator as a line with admittance as $1/{x_{di}^{'}}$, which connects $i\in \mathcal{N}_{G}$ and inner node of the generator. We denote the inner nodes of generators as $\mathcal{N}_{G}^{'}$. Then, we can get a augmented power network, the incidence matrix of which is denoted as $\hat C$. The set of nodes in the augmented power network is denoted as $\mathcal{N}^{'}=\mathcal{N}\cup \mathcal{N}_{G}^{'}$. Denote $\hat V=(E_{q}^{'},V)$, $\hat \theta = (\delta,\theta)$. Let $|\hat C|$ denote the matrix obtained from $\hat C$ by replacing all the elements $c_{ij}$ with $|c_{ij}|$. Define $\Gamma(\hat V):=\text{diag}(|B_{ij}|V_iV_j), i,j \in \mathcal{N}^{'}$. Define $A$ as 
	\begin{align}
	A_{ij}=\left\{
	\begin{array}{ll}
	-|B_{ij}|\cos(\theta_i-\theta_j),
	& i\neq j, i,j\in \mathcal{N} ;\\
	\text{diag}(|B_{ii}|),
	& i= j, i,j\in \mathcal{N};\\
	-{\cos(\delta_i-\theta_j)}/{x_{di}^{'}},
	& i \in \mathcal{N}_{G}^{'}, j \in \mathcal{N}_{G}.        \end{array}
	\right.
	\end{align}	
	For simplicity, we use the following notation. For an $n$-dimensional vector $r:=\{r_1, r_2, \cdots, r_n\}$, the diagonal matrix $\text{diag}(r_1, r_2,\cdots, r_n)$ is denoted in short by $[r]_\mathcal{D}$. And $\textbf{cos}(\cdot),\textbf{sin}(\cdot)$ are defined component-wise.

	From the definition of $W$ in \eqref{Lyapunov}, $\nabla^2_v W(v)>0$ if and only if $\nabla^2_v W_p(v)>0$, i.e. the matrix 
	\begin{align}
		\label{hessian2}
		\left[ {\begin{array}{*{20}{c}} 
			\Gamma(\hat{V})[\textbf{cos}(\hat C^T {\hat \theta})]_\mathcal{D}  & 
			[\textbf{sin}(\hat C^T {\hat \theta})]_\mathcal{D}\Gamma(\hat{V})|\hat C|^T[\hat V]_\mathcal{D}^{-1}\\
			{[\hat V]_\mathcal{D}^{-1}}|\hat C|\Gamma(\hat{V})[\textbf{sin}(\hat C^T {\hat \theta})]_\mathcal{D}& 
			A+H(\hat V)
			\end{array}} \right]
	\end{align}
	is positive definite, where 
	\begin{align}
		H(\hat V)=\left[ \begin{array}{*{20}{c}} 
		\left[\frac{x_{di}}{2x^{'}_{di}(x_{di}-x_{di}^{'})}\right]_\mathcal{D}  & 0\\
		0 & [{q_i}/{V_i^{2}}]_\mathcal{D}
		\end{array} \right]
	\end{align}

	In any steady state of power system (i.e.), the phase-angle difference between two neighboring nodes is usually small. In addition, the difference between $\delta_i$ and $\theta_i$ is also small. This implies that the matrix in \eqref{hessian2} is diagonal dominant as well as its positive definiteness.
	Therefore, Assumption A4 is usually satisfied and makes sense.
}

\makeatletter
\@addtoreset{equation}{section}
\@addtoreset{theorem}{section}
\makeatother
\renewcommand{\theequation}{C.\arabic{equation}}
\renewcommand{\thetheorem}{C.\arabic{theorem}}
\subsection{Proof of Lemma \ref{Lemma:optimality}}
\begin{proof}
	From $\dot z_{ij}=0$, we get $\mu_i^*=\mu_j^*=\mu_0$. Set $\dot{\mu}_i=0$ and add \eqref{load estimation} for  $i\in {\cal N}_{CG}$, we have 
	\setcounter{equation}{0}
	\begin{align}
	\label{eq:equilibrium:pg2}
	\sum\nolimits_{i\in {\cal {N}}_{CG}}M_i\dot\omega_i+\sum\nolimits_{i\in {\cal {N}}_{CG}}(D_i+\tau_i)\omega_i = 0
	\end{align}
	
	The right sides of (\ref{eq:closedloop.1a1}) and (\ref{eq:closedloop.1a2}) vanish in the equilibrium points, which implies $\omega_i^*=\tilde\omega_i^*=\tilde\omega_j^*=\omega_0$. Set $\dot\omega=0$ in \eqref{eq:equilibrium:pg2} and we have 
	\begin{align}
	\label{eq:equilibrium:omega2}
	\omega_0\sum\nolimits_{i\in {\cal {N}}_{CG}} (D_i+\tau_i) =0
	\end{align}
	This implies that $\omega_0=0$ due to $D_i+\tau_i > 0$. 
	
	Other proofs are same with that in Theorem \ref{thm:optimality}, which are omitted. 
\end{proof}

\makeatletter
\@addtoreset{equation}{section}
\@addtoreset{theorem}{section}
\makeatother
\renewcommand{\theequation}{D.\arabic{equation}}
\renewcommand{\thetheorem}{D.\arabic{theorem}}
\subsection{Proof of Theorem \ref{thm:stability2}}
\begin{proof}[Proof of Theorem \ref{thm:stability2}]
\setcounter{equation}{0}

We still use the Lyapunov function \eqref{Lyapunov} to analyze the stability of the closed-loop system \eqref{eq:closedloop2}. 
Denote $y=(\omega_i^T,x_1^T,x_2^T)^T, i\in {\cal N}_{CG}$, and define the following function
\begin{equation}
	\tilde f\left( y \right) = \left[ {\begin{array}{*{20}{c}}
		 	-D_i \omega_i\\
			-\left(  f_i^{'}(P_i^g)+\mu_i-\gamma _i^- + \gamma _i^+\right)\\
			f_{\mu_i}     \\
			\mu_i-\mu_j\\
			\underline{P}_i^g - {P}_i^g  \\
			{P}_i^g - \overline{P}_i^g   
			\end{array}} \right], i\in {\cal N}_{CG}
\end{equation}
where $f_{\mu_i}=P_i^g - \hat p_i - \sum\limits_{j \in {N_{ci}}} {\left( {{{ \mu }_i} - {{\mu }_j}} \right)}  - \sum\limits_{j \in {N_{ci}}} {{z_{ij}}} -\tau_i\mu_i - \tau_if_i^{'}(P_i^g) +\beta_i\omega_i$\footnote{Sometimes, we also use $\beta_i$ instead of $\beta_i(t)$ for simplification. In addition, if  \eqref{eq:A3} is not binding, we can omit $\gamma _i^-, \gamma _i^+$ in  neighborhoods of equilibrium points}.

The derivative of $W_k$ is 
\begin{align}
\label{derivative of Vk4}
&\dot W_k = \sum\limits_{i \in {\cal N}_{G}}M_i(\omega_i-\omega_i^*)\dot\omega_i + (x-x^*)^T\cdot K^{-1} \dot x \nonumber\\	
&= \sum\limits_{i \in {\cal N}_{G}}(\omega_i-\omega_i^*) (P^g_i - P^{g*}_i -D_i (\omega_i - \omega_i^*)- (P_{ei}-P_{ei}^*)) \nonumber\\
&  + (x-x^*)^T\cdot K^{-1}\dot x \nonumber\\
& \le \sum\limits_{i \in {\cal N}_{G}}(\omega_i-\omega_i^*) (P^g_i - P^{g*}_i - (P_{ei}-P_{ei}^*)) \nonumber\\
& -\sum\limits_{i \in {\cal N}_{UG}} D_i (\omega_i - \omega_i^*)^2  + (y-y^*)^T\tilde f(y)
\end{align}
where the inequality is due to the same reason for that in \eqref{derivative of Vk}.

Divide  $\dot W_k$ into two parts, and $\dot W^1_k=(y-y^*)^T \tilde f(y)$, $\dot W^2_k=\dot W_k-\dot W^1_k$. Then we will analyze the sign of $\dot W^1_k$.
\begin{align}
	&\dot W^1_k = (y-y^*)^T\tilde f(y) \nonumber\\
	& =  \int_0^1 (y-y^*)^T \frac{\partial }{\partial \tilde z}\tilde f(\tilde z(s)) (y-y^*) ds + (y-y^*)^T \tilde f(y^*)\nonumber\\
	& \le  \frac{1}{2}\int_0^1 (y-y^*)^T \left[\frac{\partial^T }{\partial \tilde z} \tilde f(\tilde z(s))  + \frac{\partial }{\partial\tilde z} \tilde f(\tilde z(s)) \right] (y-y^*) ds \nonumber\\
	\label{dot Wk1}
	& = \int_0^1 (y-y^*)^T \left[H(\tilde z(s)) \right] (y-y^*) ds
\end{align}
where $\tilde z(s)=y^*+s(y-y^*)$. The second equation is from the fact that $\tilde f(y)-\tilde f(y^*) = \int_0^1 \frac{\partial }{\partial\tilde z} \tilde f(\tilde z(s)) (y-y^*) ds$. The inequality is due to either $\tilde f(y^*) =0$ or $\tilde f(y^*) < 0, y_i\ge 0$, i.e. $(y-y^*)^T \tilde f(y^*)\le0$.
\begin{align}
	 \frac{\partial\tilde f(y) }{\partial y}  = -\left[ {\begin{array}{*{20}{c}}
	 	D&       0&                0&  0& 0&  0\\
	 	0&        \nabla^2_{P^g} f(P^g)& I&  0& -I& I\\
	 	\beta&   \tau\nabla^2_{P^g} f(P^g)-I&   \tau+L_c& C& 0&  0\\
	 	0&        0&             -C^T&  0& 0&  0\\
	 	0&       I&                0&  0& 0&  0\\
	 	0&        -I&                0&  0& 0&  0
	 	\end{array}} \right]
\end{align}
where $D=\text{diag}(D_i)$, $\beta=\text{diag}(\beta_i)$, $\tau=\text{diag}(\tau_i)$, $I$ is the identity matrix with dimension $n_{CG}$, $C$ is the incidence matrix of the communication graph, $L_c$ is the Laplacian matrix of the communication graph.

Finally, $H$ in \eqref{dot Wk1} is
\begin{align}
	H&=\frac{1}{2}\left[\frac{\partial^T }{\partial y} \tilde f(y)  + \frac{\partial }{\partial y} \tilde f(y) \right] \nonumber\\
     &=	\left[ {\begin{array}{*{20}{c}}
     	-D&       0& -\frac{1}{2}\beta&  0&  0&  0\\
     	0& -\nabla^2_{P^g} f(P^g)&  -\frac{\tau}{2}\nabla^2_{P^g} f(P^g)&  0&  0&  0\\
     	-\frac{1}{2}\beta&    -\frac{\tau}{2}\nabla^2_{P^g} f(P^g)&  -\tau-L_c&  0&  0&  0\\
     	0&        0&                0&  0&  0&  0\\
     	0&        0&                0&  0&  0&  0\\
     	0&        0&                0&  0&  0&  0
     	\end{array}} \right]
\end{align}
$H\le0$, if
\begin{align}
\left[ {\begin{array}{*{20}{c}}
	-D&       0& -\frac{1}{2}\beta\\
	0& -\nabla^2_{P^g} f(P^g)&  -\frac{\tau}{2}\nabla^2_{P^g} f(P^g)\\
	-\frac{1}{2}\beta&    -\frac{\tau}{2}\nabla^2_{P^g} f(P^g)&  -\tau
	\end{array}} \right]<0
\end{align}
By Schur complement \cite{lin:distributed}, we only need 
\begin{align}
\label{Schur complememt}
	-\tau_i - \left[ {\begin{array}{*{20}{c}}
		-\frac{1}{2}\beta_i&    -\frac{\tau_i}{2}c_i
		\end{array}} \right] 
	  \left[ {\begin{array}{*{20}{c}}
		-D_i&       0\\
		0& -c_i
		\end{array}} \right]^{-1}
	  \left[ {\begin{array}{*{20}{c}}
		-\frac{1}{2}\beta_i\\   -\frac{\tau_i}{2}c_i
		\end{array}} \right] <0
\end{align}
where $c_i=\nabla^2_{P^g_i} f(P^g_i)$. Solving \eqref{Schur complememt}, we can get
\begin{align}
	-\sqrt{\tau_i D_i(4-\tau_i c_i)}<\beta_i<\sqrt{\tau_i D_i(4-\tau_i c_i)}
\end{align}
By A2, we know $c_i\le l_i$, thus 
$$\sqrt{\tau_i D_i(4-\tau_i l_i)}\le\sqrt{\tau_i D_i(4-\tau_i c_i)},$$ $$-\sqrt{\tau_i D_i(4-\tau_i l_i)}\ge-\sqrt{\tau_i D_i(4-\tau_i c_i)}.$$
Here we need $\tau_i>0,\ 4-\tau_i l_i>0$, i.e. $0<\tau_i<4/l_i$.
Finally, we have 
\begin{align}
	-\sqrt{\tau_i D_i(4-\tau_i l_i)}<\beta_i<\sqrt{\tau_i D_i(4-\tau_i l_i)}
\end{align}
i.e. $\bar\beta_i<\sqrt{\tau_i D_i(4-\tau_i l_i)}$, implying \eqref{beta_range}.

Analysis of $\dot W_k^2$ and $\dot W_p$ as well as the convergence to a point are same as those in the proof of Theorem \ref{thm:stability}, which are omitted here.
\end{proof}




%
%
%
%
%
%
%

\end{document}